%% file: arXiv2.tex
\documentclass[12pt]{article}

\usepackage[margin=1in]{geometry}

\usepackage{setspace}

\usepackage{graphicx}

\usepackage{enumitem}

\usepackage{microtype}
\usepackage{subfigure}
\usepackage{booktabs}
\usepackage{comment}
\usepackage{wrapfig}
\usepackage{arydshln}
\usepackage{enumitem}
\usepackage{multirow}
\usepackage{url}
\usepackage{natbib}
\usepackage{authblk}

\RequirePackage[colorlinks,citecolor=blue,linkcolor=blue,urlcolor=blue,pagebackref]{hyperref}

\usepackage{amsmath}
\usepackage{amssymb}
\usepackage{mathtools}
\usepackage{amsfonts}
\usepackage{bm}
\usepackage{bbm}

\input{Macros/algorithm.tex}
\input{Macros/math.tex}
\input{Macros/theorems.tex}
\input{Macros/misc.tex}

\usepackage[capitalize,noabbrev]{cleveref}

\allowdisplaybreaks

\input{Head/Title}
\input{Head/AuthorInf}

\begin{document}

\maketitle

\input{Main.tex}

\end{document}

%% file: Macros/algorithm.tex
\usepackage{algorithm}
\usepackage{algorithmic}

%% file: Macros/math.tex
\def\1{\bm{1}}

\DeclareMathAlphabet{\mathsfit}{\encodingdefault}{\sfdefault}{m}{sl}
\SetMathAlphabet{\mathsfit}{bold}{\encodingdefault}{\sfdefault}{bx}{n}

\def\calF{{\mathcal{F}}}

\def\calH{{\mathcal{H}}}
\def\calI{{\mathcal{I}}}
\def\calJ{{\mathcal{J}}}

\def\calQ{{\mathcal{Q}}}

\def\calT{{\mathcal{T}}}
\def\calU{{\mathcal{U}}}

\newcommand{\E}{\mathbb{E}}



%% file: Macros/theorems.tex
\usepackage{amsthm}

\theoremstyle{plain}

\newtheorem{theorem}{Theorem}[section]

\newtheorem{corollary}[theorem]{Corollary}
\newtheorem{proposition}[theorem]{Proposition}

\newtheorem{definition}{Definition}[section]
\newtheorem{assumption}{Assumption}[section]

\newtheorem*{remark}{Remark}

%% file: Macros/misc.tex
\usepackage[textsize=tiny]{todonotes}
\usepackage{multirow}
\usepackage{wrapfig}
\usepackage{subfigure}
\usepackage{tabularx}

\usepackage{xcolor}
\newcount\Comments  
\newcommand{\kibitz}[2]{\ifnum\Comments=1\textcolor{#1}{#2}\fi}

\usepackage{listings}

\usepackage{tabularx}

\usepackage{tikz}
\usetikzlibrary{arrows.meta,positioning,calc}

%% file: Head/Title.tex
\title{GTDD: Generative Test-Driven Development for AI Coding Agents with Adversarial Testing}

%% file: Head/AuthorInf.tex
\author{Masahiro Kato\thanks{Email: \texttt{mkato-csecon@g.ecc.u-tokyo.ac.jp}}$\,$}

\affil{Mizuho-DL Financial Technology Co., Ltd.}

\date{\today}

%% file: Main.tex
\begin{abstract}
Test-driven development gives AI coding agents executable requirements for implementing software. Because these agents can adapt their implementations to the examples they observe, passing a predetermined collection of tests can leave substantial parts of the intended behavior unimplemented. We propose Generative Test-Driven Development (GTDD), a formulation of test-driven development in which a separate testing agent generates new inputs after each candidate implementation is fixed, using a human-specified behavioral contract and the feedback from earlier rounds. A trusted evaluator checks these inputs, returns reduced counterexamples to the coding agent, and saves them for regression testing, so development continually confronts failures beyond the initial examples. We characterize the evidence that this process provides through a finite-population analysis of false acceptance under adaptive candidate selection. The resulting bounds quantify how test visibility and repeated feedback affect acceptance, and show that fresh random audits after candidate commitment control false acceptance across development rounds. In a paired experiment on a stateful key-value store, both policies that regenerated tests during development ended with lower mean failure rates than the policy whose tests were generated once by the same language model, and giving the tester the candidate's source produced no detectable additional improvement. Further conditions requesting equal numbers of tests did not isolate any single feature of the policies as the source of this difference. GTDD combines this adaptive development feedback with established regression tests and an independent acceptance rule.
\end{abstract}

\section{Introduction}

Test-driven development (TDD) organizes software development around executable requirements: developers write tests that describe the intended behavior, implement the corresponding functionality, and use those tests to check subsequent changes \citep{beck2002}. This discipline is especially relevant to AI coding agents because a natural-language request permits many possible implementations, whereas an executable test gives the agent a concrete result against which to improve its code. As large language models (LLMs) take on code generation and repository repair, the connection between tests and successive implementation decisions becomes central to the reliability of the resulting software \citep{chen2021,jimenez2024,fakhoury2024}.

Existing systems already use generated tests to structure this interaction. CodeT evaluates candidate programs with generated tests, AlphaCodium uses them throughout an iterative coding process, and AgentCoder assigns programming and test design to separate agents \citep{chen2022codet,ridnik2024,huang2023agentcoder}. Guidance for Claude Code similarly describes a workflow in which one session writes tests and another writes the implementation \citep{anthropic2026bestpractices}. In an industrial integration setting, \citet{krawczyk2023} describes generative TDD based on historical input--output pairs that guide code generation and validate the generated mappings. These approaches make testing part of the development process itself, with test results shaping the next implementation.

The same feedback that guides development also allows an agent to adapt to the tested examples without satisfying the full behavioral contract. A program may implement only the exercised cases or special-case their inputs, and an agent with broader access may attempt to change the tests. Additional benchmark tests have exposed incorrect programs that passed the original evaluation \citep{liu2023}, while reward-optimization experiments have documented direct test subversion \citep{baker2025}. This is an instance of the more general discrepancy between optimizing an observed objective and satisfying its intended purpose \citep{amodei2016,skalse2022}. Increasing the number of tests and generating new tests during development address different parts of the problem, because acceptance also depends on what the coding agent has learned about those tests. A development framework must therefore specify both how new failures reach the coder and what evidence justifies accepting the resulting implementation.

\subsection{Contribution}

We propose GTDD as a general formulation of test-driven development with a separate generative tester. A human specifies the behavioral contract, the input schema, and the testing budget, while the tester creates concrete cases after the coding agent commits each candidate. The evaluator obtains expected responses from a trusted reference implementation and returns a controlled number of reduced failures to guide the next coding round. Saving generated cases and their reductions makes each discovered failure a regression test for later candidates. The formulation specifies what each agent may read and assigns independent random audits to the acceptance decision.

Our theoretical analysis bounds the probability of accepting a candidate whose failure rate exceeds a prescribed tolerance on a fixed finite population. For a candidate class that can realize arbitrary casewise losses, we obtain the exact number of visible tests required for a uniform guarantee. For a hidden test set used repeatedly, we apply the description-length argument of \citet{dwork2015generalization} to the possible feedback histories and give an explicit adaptive construction with a lower bound on its success probability. We also establish conditions under which generated tests support a sampling guarantee and derive fresh-audit bounds for adaptive candidates. The acceptance analysis distinguishes simultaneous conclusions about individual requirement families from acceptance of the repository as a whole, and extends repository acceptance to an unbounded sequence of development rounds through a summable error allocation.

We evaluate development feedback in a paired experiment on a stateful key-value store using the same language model as coder and tester. Both policies that generated new tests in each repair round ended with lower mean failure rates than the policy that generated tests once before development, which also requested fewer traces. The difference is concentrated in a few blocks, although the median difference is negative as well and the sign test detects it. Four further conditions that request equal numbers of traces produce intermediate rates with no detectable difference between adjacent conditions, so the study does not attribute the advantage to regeneration, to the number of traces, or to access to earlier counterexamples alone. Testers with and without source access showed no detectable difference, and neither dynamic policy had a detectable advantage over the property-based baseline drawn from the evaluation distribution. Applied to the recorded candidates, a fresh audit gives few expected acceptances, in line with their failure rates.

GTDD brings together practices already represented in generative TDD and multi-agent coding systems, and gives their interaction an explicit statistical account. Whereas the generative TDD of \citet{krawczyk2023} validates generated code against historical input--output pairs, GTDD generates new inputs after each candidate is committed and separates this development feedback from the evidence used for acceptance. Its contribution is the organization of candidate commitment, testing information, counterexample feedback, and acceptance within one development process. Existing tests remain useful as executable examples and regression checks, while generated cases extend the feedback available during development. This combination is particularly relevant to applications with natural-language or speech input, such as telephone agents, where scenario generation and evaluation of acceptable behavior require separate design decisions. Section~\ref{sec:practice} describes how the protocol accommodates these applications.

\subsection{Related Work}

GTDD connects automated test generation with adaptive evaluation. The relevant distinction among prior approaches is the role assigned to tests: specifying behavior, guiding implementation, or supporting an acceptance decision.

\paragraph{Test generation and counterexample-guided development.}
Property-based testing generates structured inputs from a specification \citep{claessen2000}, while directed testing and symbolic execution search program behavior more systematically \citep{godefroid2005,cadar2008}. Language models have also been used to extend search-based generation and to fuzz deep-learning libraries \citep{lemieux2023,deng2023}. Counterexample-guided inductive synthesis provides the underlying iterative structure: a verifier finds an input on which a candidate fails, and that input constrains the next candidate \citep{solarlezama2006,solarlezama2008}. Delta debugging makes such feedback easier to use by simplifying a failing input while preserving the failure \citep{zeller2002}. GTDD combines this structure with a generative tester and specifies a statistical acceptance rule when verification proceeds through sampled cases.

\paragraph{Generated tests in AI coding.}
CodeT uses agreement with generated tests to select candidate programs, and AlphaCodium incorporates those tests into iterative repair \citep{chen2022codet,ridnik2024}. Test-driven interactive generation uses tests to clarify user intent \citep{fakhoury2024}, and AgentCoder separates the programming, test-design, and test-execution roles \citep{huang2023agentcoder}. Class-level test-driven generation implements interdependent methods under method-level public tests with execution feedback and a bounded number of repairs \citep{liang2026classlevel}. CodeHacker generates adversarial inputs for a candidate and computes their expected outputs with a reference solution \citep{shi2026codehacker}, and AdverTest alternates between test generation and mutant generation to expose weaknesses in the tests \citep{chang2026advertest}. Together with the industrial use of generative TDD described by \citet{krawczyk2023}, these systems establish a substantial basis for generative testing. \citet{menet2026} give a probabilistic account of the two main uses of an execution environment, selecting among generated programs and conditioning generation on feedback, and show that the ambiguity of the task description limits what feedback-conditioned generation can achieve. Their analysis concerns how efficiently generation reaches a correct program, whereas ours concerns the evidence that sampled tests provide about a committed candidate. We organize their shared development structure around explicit candidate commitment and feedback rules, then analyze the evidence available for accepting an adaptive candidate.

\paragraph{Correctness and test-based objectives.}
Benchmarks such as SWE-bench evaluate repository repair, while SWE-bench Verified improves the alignment between issues and their evaluation tests through human review \citep{jimenez2024,chowdhury2024}. EvalPlus strengthens tests for generated code, and LiveCodeBench addresses contamination through temporally collected tasks \citep{liu2023,jain2024}. Our setting concerns a further source of dependence: the candidate learns from tests during its own development. The distinction between an observed objective and intended behavior is central to work on reward gaming \citep{amodei2016,skalse2022}, and the coding experiments of \citet{baker2025} give concrete examples of test subversion. Paired interventions that change only the expected outputs of visible tests measure how far generated programs follow the rules those tests express, and the extent varies widely across models \citep{liang2026follow}. We describe test passing, population accuracy, and protected-file modification attempts as separate observable outcomes so that a correctness error is not treated as evidence of intent.

\paragraph{Adaptive evaluation.}
Adaptive data analysis explains how repeated feedback can change the validity of conclusions based on a fixed evaluation sample \citep{dwork2015generalization,russo2016}. Our bound for repeated testing instantiates the description-length argument in Section~2.3 of the full version of \citet{dwork2015generalization}: count the possible histories and combine that count with an exact finite-population bound for each fixed candidate. Controlled feedback can also permit valid reuse, as demonstrated by the reusable holdout \citep{dwork2015reusable}. Related concerns arise in adversarial evaluation, where an adaptive attack changes the interpretation of performance on a fixed collection of adversarial examples \citep{madry2018,athalye2018,tramer2020}. In GTDD, fresh sampling after candidate commitment supplies an acceptance sample whose conditional distribution is known despite earlier development feedback.

\paragraph{Sequential inference.}
Sequential tests and time-uniform inference provide error control when the stopping time depends on observed results \citep{wald1945,ville1939,howard2021}. The literature on testing by betting and safe inference develops this principle through nonnegative evidence processes \citep{shafer2021,grunwald2024,ramdas2023}. We use these tools with an explicit hypothesis about the candidate under evaluation. A process for one fixed implementation supports sequential inspection of that implementation, whereas a product across different implementations tests a statement about the sequence. This distinction determines which evidence can support the current acceptance decision.

\section{Problem Setup}
\label{sec:setup}

We model development as an adaptive sequence of candidate implementations evaluated against a behavioral contract. A fixed population of test cases defines the accuracy target, and the information supplied to each agent determines how later candidates can depend on earlier tests.

\subsection{Evaluation Model}

Let $q\in\calQ$ index a repository issue, where $\calQ$ is the collection of issues under study. Each issue specifies a behavioral contract and an operation schema, together with the source files that may be edited and the files that are protected. We evaluate the contract through requirement families indexed by $\calJ_q$, each corresponding to a specified class of behavior.

\paragraph{Test cases.}
For family $j\in\calJ_q$, the finite population of size $N_{q,j}$ is
\begin{equation}
\calU_{q,j}=\{z_{q,j,1},\ldots,z_{q,j,N_{q,j}}\}.
\label{eq:finite-population}
\end{equation}
Each case is a sequence of operations with a specified initial state, which we call a trace. Cases within a family are distinct, although families may overlap. The populations are fixed before candidate generation, with trace-length limits and argument domains defining the evaluation target. A population can enumerate a bounded domain or contain a fixed selection from that domain. Complete evaluation refers to all listed cases, and the trusted reference implementation recomputes their expected responses whenever they are executed.

\paragraph{Candidate loss.}
For candidate $f$, let $\ell(f;z)\in\{0,1\}$ equal one when its response differs from the reference or it violates the execution contract. Violations include malformed responses or an incorrect number of responses, as well as runtime errors and resource-limit failures assigned to the candidate. A failure of the trusted evaluator or its isolation mechanism invalidates the execution. The family risk is
\begin{equation}
R_{q,j}(f)=\frac{1}{N_{q,j}}\sum_{i=1}^{N_{q,j}}\ell(f;z_{q,j,i}).
\label{eq:family-risk}
\end{equation}
Given tolerances $\rho_{q,j}\in(0,1)$, the target is satisfied when $R_{q,j}(f)\leq\rho_{q,j}$ holds for every $j\in\calJ_q$.

\paragraph{Fixed execution behavior.}
The finite-population analysis treats $\ell(f;\cdot)$ as a fixed function. The candidate therefore includes all executable code and readable data, together with dependency versions, the execution image, and the resource-policy configuration. Each case starts in the same isolated state without uncontrolled entropy or persistent writes from another case, and the execution reveals no distinction between development testing, auditing, and population evaluation. Resource limits must produce reproducible outcomes under this configuration. Repeated executions that disagree violate the fixed-loss assumption, as does a wall-clock limit whose outcome depends on machine load.

\subsection{Adaptive Development}

A trajectory consists of rounds $t=1,\ldots,T$ in which the coding agent chooses its next implementation using the information available so far. Let $\calH_t$ contain the repository history and public-test outcomes, together with earlier feedback, acceptance decisions, and any state kept by the coding session. An arbitrary measurable policy may select the candidate according to
\begin{equation}
f_t=\phi_t(\calH_t).
\label{eq:adaptive-candidate}
\end{equation}
The history includes the randomness used by the coding policy, so this representation permits randomized adaptive behavior.

After the coding session ends, the test harness records attempted changes to protected files and restores any modified protected material. It then hashes a canonical manifest of the complete candidate and its execution configuration, including file additions and deletions. This commitment, denoted by $C_t$, fixes the object used in every subsequent round-$t$ test. A later change invalidates the round. The reference implementation and private evaluation records remain outside the candidate's accessible filesystem, and population evaluation leaves no persistent state that a later coding session can inspect.

\subsection{Development Feedback}

A test-generation policy specifies which development inputs are available in a round and what information the tester may use. A fixed test set is chosen before the first round and is re-executed against later candidates, whereas a dynamic policy proposes an ordered collection $D_t$ of new traces in each round.

\paragraph{Information available to the tester.}
Let $\calI_t$ contain the issue statement and contract, the operation schema, and the public examples and previously reported counterexamples with their family labels. A source-informed tester may also read the committed candidate, and it generates
\begin{equation}
D_t=\psi_t(\calI_t,f_t,U_t),
\label{eq:development-selector}
\end{equation}
where $U_t$ is its request randomness. A source-blind tester receives $\calI_t$ without $f_t$ and has no other channel to the repository history. Both policies are excluded from audit cases, family audit outcomes, and population results.

\paragraph{Counterexamples and regression tests.}
The evaluator validates the proposals and computes reference responses. Let $V_t$ contain the valid traces absent from the existing regression tests $G_{t-1}$. The failures found in the round are
\begin{equation}
F_t=\{z\in G_{t-1}\cup V_t:\ell(f_t;z)=1\}.
\label{eq:discovered-failures}
\end{equation}
A fixed feedback rule selects at most $k$ reduced counterexamples, collected in $E_t$. Reduction records the original trace privately and preserves the designated discrepancy, although the shorter trace may lie outside the target population. A counterexample is reported at most once. The cumulative set of regression tests is
\begin{equation}
G_t=G_{t-1}\cup V_t\cup E_t,
\qquad
G_0=\varnothing,
\label{eq:regression-tests}
\end{equation}
which keeps both the original inputs and their reported reductions available in later rounds. Under a fixed policy, $V_1$ contains the valid initial tests and $V_t=\varnothing$ holds for $t>1$, while newly reported reductions can still be added to $G_t$. Invalid proposals are excluded from execution but count toward the number of proposals.

\paragraph{Feedback history and audit access.}
The reported counterexamples form the history $\tau_{t-1}=(E_1,\ldots,E_{t-1})$. This history is part of $\calH_t$, which also contains any acceptance decision observed by the coder. For independent acceptance, the harness samples an audit after recording $C_t$. For family $j$, let $S_{t,j}$ be a uniform sample without replacement from $\calU_{q,j}$. Writing $\1[\cdot]$ for the indicator function, define
\begin{equation}
A_{t,j}=\1\Bigl[\sum_{z\in S_{t,j}}\ell(f_t;z)=0\Bigr].
\label{eq:audit-indicator}
\end{equation}
Audit cases and their individual outcomes are excluded from $D_t$, $E_t$, and $G_t$.

\subsection{Evaluation Criteria}

Development feedback is useful when repairs improve behavior beyond the reported inputs. We formalize this distinction through test passing and population risk, and use population risk to define an incorrect acceptance decision.

\paragraph{Test passing and target accuracy.}
Let $B_t$ be the tests whose pass criterion the coding agent observes, including public tests and any visible development tests. Define
\begin{equation}
P_t=\1\bigl[\ell(f_t;z)=0\ \text{for every } z\in B_t\bigr],
\qquad
Q_t=\1\bigl[R_{q,j}(f_t)\leq\rho_{q,j}\ \text{for every } j\in\calJ_q\bigr].
\label{eq:pass-success}
\end{equation}

\begin{definition}[Discrepancy between test passing and target accuracy]
The signed difference is $\Gamma_t=P_t-Q_t$, and the event of passing the observed tests without meeting the target is $C^{\mathrm{dev}}_t=P_t(1-Q_t)$. An average of $\Gamma_t$ can cancel positive and negative differences, whereas the frequency of $C^{\mathrm{dev}}_t$ measures the one-sided discrepancy directly.
\label{def:pass-risk-discrepancy}
\end{definition}

False acceptance occurs when the acceptance rule accepts a candidate with $Q_t=0$. Our guarantees bound the probability of this event over the testing procedure, rather than the conditional proportion of inaccurate candidates among those accepted. These quantities describe observable correctness outcomes. Intentional interference is recorded separately through attempted edits to protected files.

\paragraph{Improvement on inputs not yet reported.}
Let $D^{\mathrm{seen}}_{t,j}$ contain population cases whose identities or complete inputs have reached the coder by the end of round $t$, including public cases and cases identifiable from feedback. Write $\calU^{\mathrm{unseen}}_{t,j}=\calU_{q,j}\setminus D^{\mathrm{seen}}_{t,j}$. When this set is nonempty, define
\begin{equation}
R^{\mathrm{unseen}}_{t,j}(f)=\frac{1}{|\calU^{\mathrm{unseen}}_{t,j}|}\sum_{z\in\calU^{\mathrm{unseen}}_{t,j}}\ell(f;z).
\label{eq:unseen-risk}
\end{equation}
Evaluating $f_t$ and $f_{t+1}$ on this same set measures whether the update improves behavior beyond known inputs. A reported failure can be repaired while this risk stays unchanged or increases. The full family risk can obscure that distinction because repairing one known case lowers the mean by $1/N_{q,j}$ even when no other outcome changes. Equation~\eqref{eq:unseen-risk} is undefined when every population case has been exposed.

\section{Generative Test-Driven Development}
\label{sec:gtdd}

GTDD places a separate testing agent inside the development cycle so that each committed implementation faces newly generated inputs. The contract and evaluation rules stay fixed, while the tester uses permitted feedback to search for failures that inform the next coding round. The protocol specifies this interaction independently of the rule used to accept an implementation.

\subsection{Round Protocol}

Figure~\ref{fig:protocol} shows the order of one round. The coding session edits the repository, after which the harness restores protected material and commits the candidate and its execution configuration. Public tests and all subsequent evaluations run against that same snapshot.

\begin{figure}[htbp]
\centering
\resizebox{\textwidth}{!}{%
\begin{tikzpicture}[
 stage/.style={draw, rounded corners=2pt, minimum height=2.6em, text width=6.0em, align=center, font=\small},
 axis/.style={draw, dashed, rounded corners=2pt, text width=6.6em, align=center, font=\scriptsize, inner sep=3pt},
 flow/.style={-{Latex[length=2mm]}, thick},
 feed/.style={-{Latex[length=2mm]}, thick, dashed},
 node distance=0.55cm and 0.45cm
]
\node[stage] (coding) {coding\\session};
\node[stage, right=of coding] (commit) {candidate\\commitment};
\node[stage, right=of commit] (testing) {development\\testing};
\node[stage, right=of testing] (audit) {fresh audit\\(if used)};
\node[stage, right=of audit] (accept) {acceptance\\record};
\node[stage, right=of accept] (population) {population\\evaluation};
\draw[flow] (coding) -- (commit);
\draw[flow] (commit) -- (testing);
\draw[flow] (testing) -- (audit);
\draw[flow] (audit) -- (accept);
\draw[flow] (accept) -- (population);
\node[axis, below=of testing] (prov) {\textbf{test generation}\\public only, static LLM,\\static property,\\dynamic blind,\\dynamic informed};
\node[axis, below=of accept] (acc) {\textbf{acceptance}\\screening only,\\regression tests,\\fresh audit};
\node[axis, above=of testing] (disc) {\textbf{disclosure}\\standard or\\compact-failure\\rule,\\$\leq k$ reduced\\counterexamples};
\draw[feed] (disc.west) -| ($(coding.north)+(0,0)$) node[pos=0.25, above, font=\scriptsize] {to the next round};
\node[font=\scriptsize, below=0.15cm of population, text width=6.2em, align=center] {never disclosed\\to any agent};
\end{tikzpicture}}
\caption{Order of a development round. The test-generation policy supplies inputs, the feedback rule returns at most $k$ reduced failures to the next coding round, and the acceptance rule determines whether the candidate is accepted. Population evaluation follows the recorded acceptance decision and is never reported to either agent.}
\label{fig:protocol}
\end{figure}

A source-informed testing session receives a read-only copy matching the commitment, together with the contract and the permitted history. A source-blind session receives the same information except for the source. Each request asks for a specified number of stateful traces without expected outputs, which the trusted reference implementation computes. When a condition uses fixed tests, their manifest is checked before they are re-executed.

The evaluator executes each valid nonduplicate proposal against both the reference and the candidate. For a failure, it first finds the shortest failing prefix and then deletes operations while preserving validity and the designated discrepancy. An output mismatch keeps the identity of the divergent operation even if its position changes, and an execution failure keeps its error class. Each shorter trace is verified before it replaces the preceding trace, with the operation limit bounding the reduction cost as described in Appendix~\ref{app:disclosure}.

At most $k$ reduced failures are returned to the coder. Each record contains the trace and its family label, the expected and observed responses, and the execution status, while the link to the original trace remains private. Passing generated inputs also become regression tests. The harness records the acceptance decision before evaluating the complete population, and that evaluation supplies no feedback to coding, test generation, or stopping.

\subsection{Development Policies}

A development condition specifies how tests are generated and which failures are returned. These choices determine the information available for the next repair while leaving the acceptance rule separately configurable.

\paragraph{Test generation.}
The protocol supports public examples alone, fixed tests generated from the specification, and tests generated in each round. Property-based generators and LLMs can supply either fixed or newly generated inputs. In a visible condition, the tests and expected responses are protected files inside the repository, whereas a hidden condition keeps them outside the coder's access. Dynamic LLM testing additionally specifies whether the tester may inspect the source. We call the tester adversarial because its objective is to propose valid inputs on which the committed candidate disagrees with the reference implementation, using the information in $\calI_t$, the source when the condition permits it, and a fixed number of traces per request. Its success is the number of such failures it finds, and it has no access to audit samples or population results, so adversarial testing shapes development feedback but not the evidence for acceptance.

\paragraph{Feedback selection.}
The standard rule prioritizes diversity across requirement families and failure mechanisms before selecting shorter traces. The compact-failure rule prioritizes short traces with few distinct operations and then families already reported. This second rule concentrates feedback on compact failures and allows a controlled comparison of how counterexample selection affects repair. Both rules act on the same observed development failures without access to audit allocations or population outcomes. Appendix~\ref{app:disclosure} specifies their deterministic ordering and reduction budgets.

\paragraph{Development outcomes.}
Discovery counts valid nonduplicate failures on the current candidate, with invalid and duplicate proposals kept in the denominator. Repair measures the next candidate's performance on the reported counterexamples, while equation~\eqref{eq:unseen-risk} measures improvement on the same previously unknown population cases. A regression occurs when a later candidate fails a saved test that it previously passed. Reporting these outcomes separately reveals whether a change in the final risk comes from discovering different failures, repairing the reported inputs, or preserving earlier behavior. Appendix~\ref{app:elementary} states elementary properties of these outcomes and of the information available to the tester.

A comparison of development policies keeps the coding budget and acceptance rule fixed. The generation budget and repeated execution costs also belong to the specification of each policy, since an initial collection of tests incurs both generation cost and re-execution cost in later rounds.

\subsection{Acceptance Rules}
\label{sec:acceptance-comparison}

The acceptance rule states which evidence is sufficient to accept the committed candidate. Every rule requires an intact snapshot and passing public tests. Public screening stops at that criterion, and a regression-based rule additionally requires every saved development test to pass.

GTDD uses a fresh audit of every requirement family to control false acceptance. A condition may also require the regression tests to pass, and this additional requirement can only reduce the probability of accepting an inaccurate candidate. With a positive risk tolerance, an audit-only rule can accept a candidate that fails a known development case, so the choice to require regression tests is part of the acceptance specification. Comparisons between acceptance rules keep this choice and the development feedback fixed.

The coder receives only the aggregate acceptance decision. Sampled inputs and family-specific results remain private. A trajectory can follow a prescribed stopping policy, including stopping at its first acceptance or continuing for a specified number of rounds. The upper bounds in Section~\ref{sec:theory} control the event that at least one accepted candidate exceeds a family tolerance during the evaluated trajectory.

\subsection{Integration with Existing Tests}

Existing tests provide the starting point for GTDD by expressing the contract through concrete examples and protecting behavior that must remain correct. Re-executing them gives every candidate an inexpensive initial check, while generated tests extend the range of failures available for development feedback.

The combined procedure keeps the initial tests, adds valid generated traces after each commitment, and saves the counterexamples reported to the coder. By equation~\eqref{eq:regression-tests}, a failure found through generation becomes a fixed regression test for subsequent candidates. A strong property-based generator can also supply the initial tests when the input schema captures the relevant behavior. In the experiment of Section~\ref{sec:case-study}, the property-based baseline and the dynamic policies had no detectable difference in final failure rate, which makes this baseline an important part of the comparison.

Acceptance draws a fresh sample after commitment from the population fixed before development. The sample size is determined before the draw, and individual audit cases remain outside the development process. This division gives examples and regression tests a stable role while assigning error control over the population to a known sampling rule.

\subsection{Applications}
\label{sec:practice}

The balance between existing tests and generated tests depends on how inputs can be produced and how correct behavior can be evaluated. GTDD makes both choices explicit, whether inputs are structured operation traces or natural-language interactions.

\paragraph{Structured inputs.}
For a key-value store, parser, or formally specified interface, an operation schema describes valid inputs and a reference implementation supplies expected outputs. A bounded collection of traces then defines a finite population on which committed candidates have fixed losses. This setting directly supports the post-commit sampling assumption used in the acceptance analysis.

\paragraph{Language and speech.}
A telephone agent encounters dialogue patterns such as interruptions and topic changes, while the consequences of an utterance depend on the agent's dialogue state and tool calls. Generated scenarios can vary these patterns and target the flows exposed by the candidate implementation. For speech input, scenarios can also represent recognition errors, accents, and background noise. Existing scripted calls remain useful for requirements such as identity verification before an account change and transfer to a person on request. Generated dialogues complement these checks and become regression tests when they expose a failure.

\paragraph{Evaluation criteria.}
Natural-language responses often admit several acceptable outcomes, so evaluation may use a rubric, a model-based judge, or human review instead of a reference implementation. The finite-population guarantees concern the loss assigned by the specified evaluator on the listed scenarios and require reproducible candidate and evaluator behavior. The validity of that criterion is a separate question about its agreement with the intended behavior. Fixing the evaluation procedure before development and keeping audit scenarios private preserves the distinction between generating useful feedback and deciding whether the candidate meets the stated target.

\section{Theoretical Analysis}
\label{sec:theory}

We seek a frequentist bound on the probability of accepting an implementation whose failure rate exceeds the prescribed tolerance. The population and each committed candidate's losses are fixed, while probability is taken over test sampling and any independent randomization in development. The analysis first relates false acceptance to sample size, then accounts for the information available when the candidate is chosen. This establishes which guarantees follow from known tests, repeated feedback, and newly sampled tests, and yields a sequential acceptance rule for GTDD.

Fix an issue and one requirement family, write $\calU=\{z_1,\ldots,z_N\}$ for its population of size $N$, and let $R(f)$ be the family risk in equation~\eqref{eq:family-risk} with tolerance $\rho\in(0,1)$. We call a candidate bad when $R(f)>\rho$ holds, and let $\alpha\in(0,1)$ be the target error probability. Sample sizes are integers between zero and $N$. The lower-bound constructions use an unrestricted casewise candidate class, which can realize every binary loss pattern on $\calU$ and includes candidates that recognize complete inputs and return their known specified outputs. For the upper bounds, candidate selection may be arbitrarily adaptive under the stated information conditions.

\subsection{False Acceptance Probability}

For a candidate fixed before sampling, false acceptance occurs when the sampled inputs miss all its failures. Counting those inputs gives an exact finite-population probability, which supplies the basic bound used throughout the analysis.

\begin{proposition}[Exact acceptance probability]\label{prop:audit-probability}
Let $K(f)=\sum_{i=1}^{N}\ell(f;z_i)$, and let $S$ be a uniform sample of $n$ cases drawn without replacement independently of the fixed candidate $f$. Then, we have
\begin{equation}
\Pr(f\text{ passes }S)=\frac{\binom{N-K(f)}{n}}{\binom{N}{n}}\leq\Bigl(1-\frac{K(f)}{N}\Bigr)^{n},
\label{eq:hypergeometric}
\end{equation}
with $\binom{a}{n}=0$ for $n>a$. Define $K_\rho=\lfloor\rho N\rfloor+1$ and $q_{N,\rho}(n)=\binom{N-K_\rho}{n}/\binom{N}{n}$. If $R(f)>\rho$ holds, then we have $\Pr(f\text{ passes }S)\leq q_{N,\rho}(n)\leq(1-\rho)^{n}$.
\end{proposition}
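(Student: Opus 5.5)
Since $f$ is fixed before sampling and $S$ is drawn independently of it, the loss pattern $(\ell(f;z_1),\ldots,\ell(f;z_N))$ is a deterministic binary vector with exactly $K=K(f)$ ones. The event that $f$ passes $S$ is the event that $S$ is contained in the set of $N-K$ cases with zero loss. Under uniform sampling without replacement, every $n$-subset of $\calU$ has probability $1/\binom{N}{n}$. Exactly $\binom{N-K}{n}$ of these subsets avoid all failures, and under the stated convention this count is zero when $n>N-K$. This gives the equality in \eqref{eq:hypergeometric}.

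For the inequality, I will write the ratio as a telescoping product,
\begin{equation*}
\frac{\binom{N-K}{n}}{\binom{N}{n}}=\prod_{i=0}^{n-1}\frac{N-K-i}{N-i},
\end{equation*}
valid when $n\le N-K$; if $n>N-K$ the left side is zero and the bound is trivial. For each $i\ge0$ with $N-i>0$ we have $\frac{N-K-i}{N-i}=1-\frac{K}{N-i}\le 1-\frac{K}{N}$, and every factor is nonnegative. Multiplying the $n$ factors gives $(1-K/N)^n$. The degenerate case $n=0$ gives $1\le1$.

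For the second claim, $R(f)>\rho$ means $K>\rho N$. Since $K$ is an integer, this gives $K\ge\lfloor\rho N\rfloor+1=K_\rho$. The map $K\mapsto\binom{N-K}{n}$ is nonincreasing in $K$, because the set of failure-free $n$-subsets shrinks as failures are added. Hence $\Pr(f\text{ passes }S)\le\binom{N-K_\rho}{n}/\binom{N}{n}=q_{N,\rho}(n)$.

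It remains to show $q_{N,\rho}(n)\le(1-\rho)^n$. Applying the first part with $K=K_\rho$ gives $q_{N,\rho}(n)\le(1-K_\rho/N)^n$. Since $K_\rho>\rho N$, we have $0\le 1-K_\rho/N<1-\rho$ whenever $K_\rho\le N$. If instead $K_\rho>N$, then $q_{N,\rho}(n)=0$ for $n\ge1$, and the case $n=0$ is again trivial.

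The only step needing care is this bookkeeping of the boundary cases ($n>N-K$, $K_\rho>N$, $n=0$), so that $1-K/N$ is nonnegative wherever the power bound is used. The rest is elementary counting.
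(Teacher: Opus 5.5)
Your proof is correct and follows the paper's argument: counting failure-free subsets for the equality, bounding each factor of the product $\prod_{i=0}^{n-1}(N-K-i)/(N-i)$ by $1-K/N$, using monotonicity in $K$ to reduce to $K_\rho$, and finishing with $K_\rho/N>\rho$. Your boundary case $K_\rho>N$ is vacuous, since $\rho<1$ gives $\lfloor\rho N\rfloor\le N-1$ and hence $K_\rho\le N$, but handling it does no harm.
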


\begin{proof}
The candidate passes exactly when all sampled indices lie among its $N-K(f)$ passing cases, which gives the equality. If $n\leq N-K(f)$ holds, each factor of $\prod_{r=0}^{n-1}(N-K(f)-r)/(N-r)$ is at most $1-K(f)/N$. For larger $n$, the probability is zero. The expression is nonincreasing in $K(f)$, so the largest acceptance probability among bad candidates occurs at $K_\rho$, and $K_\rho/N>\rho$ gives the final inequality.
\end{proof}

The integer threshold $K_\rho$ is the smallest number of failures above tolerance. Using it keeps the exact endpoint: once $n\geq N-K_\rho+1$ holds, every passing candidate must be within tolerance.

A zero-failure audit can reject a candidate within tolerance that fails only a few cases, so an audit may instead accept when the sample contains at most $c$ failures. The same argument then applies to the lower tail of the hypergeometric count.

\begin{proposition}[Audits that permit failures]\label{prop:audit-failures}
Let $c$ be a nonnegative integer, let $S$ be as in Proposition~\ref{prop:audit-probability}, and let the audit accept when $S$ contains at most $c$ failing cases. Then, the acceptance probability of the fixed candidate $f$ is
\begin{equation}
\Pr\bigl(|\{z\in S:\ell(f;z)=1\}|\leq c\bigr)=\frac{1}{\binom{N}{n}}\sum_{h=0}^{\min\{c,n\}}\binom{K(f)}{h}\binom{N-K(f)}{n-h}.
\label{eq:audit-failures}
\end{equation}
It is nonincreasing in $K(f)$, so every candidate with $R(f)>\rho$ is accepted with probability at most the value $q^{(c)}_{N,\rho}(n)$ of equation~\eqref{eq:audit-failures} at $K(f)=K_\rho$, and $q^{(0)}_{N,\rho}=q_{N,\rho}$.
\end{proposition}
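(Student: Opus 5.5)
The plan has three parts: count the samples to get equation~\eqref{eq:audit-failures}, prove that the resulting expression is nonincreasing in $K(f)$ by a coupling argument, and then conclude exactly as in Proposition~\ref{prop:audit-probability}.

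\textbf{Step 1: the exact formula.} Since $f$ is fixed and $S$ is drawn independently, the loss pattern on $\calU$ is a fixed set of $K=K(f)$ failing cases. The number of failures in $S$ is therefore hypergeometric. There are $\binom{N}{n}$ equally likely subsets of size $n$. Those containing exactly $h$ failing cases number $\binom{K}{h}\binom{N-K}{n-h}$, with the stated convention that binomials with too-large lower index vanish. Summing over $h$ from $0$ to $\min\{c,n\}$ gives equation~\eqref{eq:audit-failures}.

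\textbf{Step 2: monotonicity in $K$ (the main step).} I would avoid manipulating binomial sums and use a coupling instead.

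1. Fix the uniform sample $S$ once.
2. Take two candidates whose failing sets are nested, $F\subseteq F'$, with $|F|=K$ and $|F'|=K+1$. Such sets exist because only the count matters: any set of size $K$ gives the same distribution by exchangeability.
3. Pointwise in $S$, we have $|S\cap F|\leq|S\cap F'|$.
4. Hence the event $\{|S\cap F'|\leq c\}$ is contained in $\{|S\cap F|\leq c\}$.
5. Taking probabilities shows the acceptance probability at $K+1$ is at most that at $K$.

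The only care needed is the reduction to nested sets, which rests on the exact formula depending on $f$ only through $K(f)$.

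\textbf{Step 3: bad candidates and the case $c=0$.} If $R(f)>\rho$, then $K(f)>\rho N$. Since $K(f)$ is an integer, $K(f)\geq\lfloor\rho N\rfloor+1=K_\rho$. Monotonicity then gives that the acceptance probability is at most the value of equation~\eqref{eq:audit-failures} at $K_\rho$, which is $q^{(c)}_{N,\rho}(n)$.

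For $c=0$, only the $h=0$ term remains, namely $\binom{N-K_\rho}{n}/\binom{N}{n}$. This is $q_{N,\rho}(n)$ as defined in Proposition~\ref{prop:audit-probability}.
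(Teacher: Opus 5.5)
Your proposal is correct and follows the paper's proof. The paper also treats the sampled failure count as hypergeometric with parameters $(N,K(f),n)$, and it proves monotonicity by noting that adding one failing case can only increase the sampled failure count for every sample, which is exactly your nested-set coupling; your reduction to nested failing sets and the integer step $K(f)\geq K_\rho$ spell out details the paper leaves implicit.
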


\begin{proof}
The number of sampled failures is hypergeometric with parameters $(N,K(f),n)$, which gives equation~\eqref{eq:audit-failures}. Adding one failing case to a candidate can only increase the number of sampled failures for every sample, so the probability of at most $c$ sampled failures cannot increase.
\end{proof}

The fresh-audit bounds of Section~\ref{sec:fresh-theory} use only this conditional bound for each fixed candidate, so they hold with $q^{(c)}_{N,\rho}$ in place of $q_{N,\rho}$. The remaining results are stated for the zero-failure rule.

\subsection{Sample Size for Known Tests}
\label{sec:static-theory}

When a candidate is selected after the tests and their expected responses are revealed, it can adapt directly to the tested inputs. In the unrestricted casewise class, the largest accepted risk is then determined by the fraction of the population left untested.

\begin{proposition}[Adaptation to known tests]
In the unrestricted casewise class, if $S\subsetneq\calU$ holds and its required outputs are known, there is a candidate that passes every case in $S$ and fails every case outside it.
\label{prop:known-tests}
\end{proposition}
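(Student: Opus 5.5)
The proof is a direct construction, so the plan is to write down the candidate explicitly and check its loss on each case. Since $S$ and its required outputs are known, I will define $f$ casewise. On each input $z\in S$, $f$ returns the reference responses that the evaluator computes for $z$. On each $z\in\calU\setminus S$, it returns a response that differs from the reference, or it deliberately violates the execution contract. The unrestricted casewise class is assumed to realize every binary loss pattern on $\calU$, so such an $f$ exists; equivalently, it is the candidate realizing the pattern $\ell(f;z)=\1[z\notin S]$.

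The verification then proceeds in three steps. \emph{First}, for $z\in S$ the responses match the reference exactly, so $\ell(f;z)=0$ and $f$ passes every case in $S$.

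\emph{Second}, for $z\notin S$ there must always be some response that disagrees with the reference. A deviation is always available: for example, the candidate can report an incorrect number of responses or a malformed response. This counts as a contract violation and gives $\ell(f;z)=1$.

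\emph{Third}, the recognition step needs care. Cases within a family are distinct traces, so ``recognizing complete inputs'' distinguishes members of $S$ from nonmembers without ambiguity. Moreover, the fixed-execution assumption means the loss is a fixed function of $z$, so the behavior on each trace is well defined.

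The hypothesis $S\subsetneq\calU$ guarantees that the complement is nonempty, so the construction genuinely fails some case and has $R(f)=|\calU\setminus S|/N>0$. In the degenerate case $S=\calU$, no failing case could exist.

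The only point requiring attention is the observation that the statement asserts existence within the class. The obstacle is therefore merely definitional: the construction must respect the execution contract on $S$ and must not rely on distinguishing testing from evaluation. The fixed-loss assumption excludes such a distinction, and the construction does not need it, because $f$ keys only on the input trace.
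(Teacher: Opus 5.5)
Your proposal is correct and takes the same approach as the paper. The paper's proof likewise notes that the unrestricted casewise class contains the candidate returning the required outputs on $S$ and an incorrect output on $\calU\setminus S$, i.e.\ the loss pattern $\1[z\notin S]$ with risk $1-|S|/N$; your remarks on distinct traces and the fixed-loss assumption are harmless but unnecessary, since existence follows directly from the class realizing every binary loss pattern on $\calU$.
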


\begin{proof}
The class contains the candidate that returns the required output on $S$ and an incorrect output on its complement. Its risk is $1-|S|/N$.
\end{proof}

\begin{proposition}[Exact size requirement for known tests]
If $|S|=n$ holds and the rule accepts every candidate that passes $S$, the largest accepted risk in the unrestricted casewise class is $1-n/N$. A uniform guarantee that every accepted candidate has risk at most $\rho$ holds exactly when
\begin{equation}
n\geq N-\lfloor\rho N\rfloor.
\label{eq:static-necessity}
\end{equation}
\label{prop:static-size}
\end{proposition}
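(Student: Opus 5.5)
The argument has two parts: a matching upper and lower bound on the largest accepted risk, followed by an integer conversion that turns that bound into condition~\eqref{eq:static-necessity}.

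\emph{Largest accepted risk.}
\begin{itemize}
\item \emph{Upper bound.} Any candidate that passes $S$ has $\ell(f;z)=0$ on all $n$ cases of $S$. It can therefore fail at most the $N-n$ cases outside $S$, so $R(f)\leq (N-n)/N=1-n/N$.
\item \emph{Attainment when $n<N$.} Proposition~\ref{prop:known-tests} applies because $S\subsetneq\calU$. It supplies a candidate in the unrestricted casewise class that passes $S$ and fails its complement, so its risk is exactly $1-n/N$.
\item \emph{Degenerate case $n=N$.} Here $S=\calU$, every accepted candidate has risk $0=1-n/N$, and the correct candidate attains it.
\end{itemize}
Together these show the supremum of accepted risk equals $1-n/N$ and is achieved.

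\emph{Integer conversion.} A uniform guarantee that every accepted candidate has risk at most $\rho$ holds if and only if the largest accepted risk satisfies $1-n/N\leq\rho$. This is equivalent to $N-n\leq\rho N$.

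The one delicate point is the integrality step. Since $N-n$ is an integer, $N-n\leq\rho N$ holds if and only if $N-n\leq\lfloor\rho N\rfloor$, that is, $n\geq N-\lfloor\rho N\rfloor$, which is~\eqref{eq:static-necessity}.
\begin{itemize}
\item \emph{Sufficiency.} The upper bound from the first part gives it directly.
\item \emph{Necessity.} If $n<N-\lfloor\rho N\rfloor$, then $n<N$. The candidate from Proposition~\ref{prop:known-tests} fails $N-n\geq\lfloor\rho N\rfloor+1=K_\rho$ cases, so it is bad and yet passes $S$.
\end{itemize}
This connects to the threshold $K_\rho$ of Proposition~\ref{prop:audit-probability}. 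The condition in~\eqref{eq:static-necessity} is the same as $n\geq N-K_\rho+1$, the exact endpoint noted after that proposition.
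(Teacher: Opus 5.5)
Your proposal is correct and follows the paper's proof: bound the failures of a passing candidate by $N-n$, attain that count with the construction of Proposition~\ref{prop:known-tests}, and use the integrality of $N-n$ to convert $(N-n)/N\leq\rho$ into $n\geq N-\lfloor\rho N\rfloor$. Your separate treatment of $n=N$ is a small tidy-up the paper leaves implicit; it is needed there because Proposition~\ref{prop:known-tests} requires $S\subsetneq\calU$.
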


\begin{proof}
A passing candidate fails at most $N-n$ cases, and Proposition~\ref{prop:known-tests} attains that count. Because $N-n$ is an integer, $(N-n)/N\leq\rho$ is equivalent to equation~\eqref{eq:static-necessity}.
\end{proof}

For $N=500$ and $\rho=0.05$, the threshold is $475$ cases. With fewer known tests, the candidate of Proposition~\ref{prop:known-tests} passes every observed test while missing the target, which is the event $C^{\mathrm{dev}}_t$ of Definition~\ref{def:pass-risk-discrepancy}. The result depends on the candidate class. If a finite class $\calF$ is fixed independently of a uniformly sampled test set, a union bound gives $\Pr(\exists f\in\calF: R(f)>\rho,\ f\text{ passes }S)\leq|\calF|\,q_{N,\rho}(n)$, even if the tests are revealed before a candidate is selected. Structural constraints on admissible implementations therefore change the size requirement for known tests.

\subsection{Error Bounds under Repeated Feedback}
\label{sec:reused-theory}

A hidden test set initially prevents direct adaptation to its inputs, but repeated feedback reveals information about that same set. We quantify this dependence by counting the possible feedback histories. The sufficient bound applies the description-length argument of \citet[Section~2.3 of the full version]{dwork2015generalization} with the hypergeometric probability from Proposition~\ref{prop:audit-probability}. An explicit probing construction then shows how revealed cases can be used to avoid the hidden tests.

\paragraph{A sufficient bound from feedback histories.}
The relevant history contains every observation about the test set that reaches the coding agent, including an acceptance decision when it supplies additional information.

\begin{assumption}[Repeated evaluation on a hidden test set]
A uniform size-$n$ subset $S\subseteq\calU$ is sampled before the first round. Conditional on randomness $\xi$ independent of $S$, candidate $f_t$ is a function of the preceding messages $W_1,\ldots,W_{t-1}$. Each message includes every observation that depends on $S$ and is available to the coding agent in that round. For every preceding history and fixed $\xi$, message $W_s$ takes at most $L_s$ values, where $L_s$ is deterministic. Round $t$ accepts when $f_t$ passes $S$.
\label{ass:reused}
\end{assumption}

\begin{theorem}[Finite-population bound from feedback histories]
Under Assumption~\ref{ass:reused}, let $\calT_{t-1}(\xi)$ be the set of possible feedback histories before round $t$ given $\xi$. Then, we have
\begin{equation}
\begin{gathered}
\Pr\bigl(\exists\,t\in\{1,\ldots,T\}:\ f_t\text{ passes }S,\ R(f_t)>\rho\bigr)
\leq\min\Bigl\{1,\ q_{N,\rho}(n)\,\E\Bigl[\sum_{t=1}^{T}|\calT_{t-1}(\xi)|\Bigr]\Bigr\},\\
|\calT_{t-1}(\xi)|\leq\prod_{s=1}^{t-1}L_s.
\end{gathered}
\label{eq:history-bound}
\end{equation}
For raw reported subsets of size at most $k$ together with an acceptance bit, one may take $L_s=2\sum_{i=0}^{\min\{k,N\}}\binom{N}{i}\leq2(N+1)^{k}$. A sufficient exponential condition is
\begin{equation}
n\geq\frac{\log(T/\alpha)+(T-1)\bigl(k\log(N+1)+\log2\bigr)}{\log\bigl(1/(1-\rho)\bigr)}.
\label{eq:history-count}
\end{equation}
If the acceptance bit is unobserved or determined by the other feedback, the factor $2$ and the term $\log2$ are omitted.
\label{thm:feedback-history}
\end{theorem}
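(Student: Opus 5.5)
The proof follows the description-length argument. Condition on $\xi$ throughout; since $\xi$ is independent of $S$, the conditional law of $S$ is still uniform over size-$n$ subsets. With $\xi$ fixed, $f_t$ is a deterministic function of the history $h=(W_1,\ldots,W_{t-1})$, and $h$ ranges over the finite set $\calT_{t-1}(\xi)$.

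For each $t$ and each $h\in\calT_{t-1}(\xi)$, let $f_t^{h}$ be the candidate that round $t$ would produce if the realized history were $h$. This candidate depends only on $(\xi,h)$ and not on $S$. It is therefore a fixed candidate in the sense of Proposition~\ref{prop:audit-probability}. Whenever $R(f_t^{h})>\rho$, we get $\Pr(f_t^{h}\text{ passes }S\mid\xi)\le q_{N,\rho}(n)$.

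The event that some round $t$ falsely accepts is contained in the union over $t$ and $h\in\calT_{t-1}(\xi)$ of the events $\{f_t^{h}\text{ passes }S,\ R(f_t^{h})>\rho\}$. This holds because the realized history is one of the possible histories, and the realized candidate equals $f_t^{h}$ for that $h$. A union bound then gives the conditional bound $q_{N,\rho}(n)\sum_{t}|\calT_{t-1}(\xi)|$. Taking expectations over $\xi$ and truncating at $1$ yields the first line of equation~\eqref{eq:history-bound}.

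The key point is that we never condition on the realized history, which depends on $S$. We enumerate all histories that could occur given $\xi$, so no conditioning on $S$ is needed. This is the main subtlety; the assumption that $W_s$ captures every $S$-dependent observation is what makes $f_t^{h}$ well defined.

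The product bound $|\calT_{t-1}(\xi)|\le\prod_{s<t}L_s$ follows by induction on $t$. Each history of length $s-1$ extends in at most $L_s$ ways, and $L_s$ is deterministic and holds for every preceding history.

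For raw reported subsets of size at most $k$, the number of subsets of $\calU$ of size at most $k$ is $\sum_{i\le\min\{k,N\}}\binom{N}{i}$. Multiplying by $2$ for the acceptance bit gives $L_s$. To bound the sum, use $\sum_{i\le k}\binom{N}{i}\le\sum_{i\le k}N^{i}\le(N+1)^{k}$, either by the binomial expansion of $(N+1)^{k}$ or by the injection $\binom{N}{i}\le N^{i}\le\binom{k}{i}N^{i}$. If the acceptance bit is unobserved or determined by the other feedback, it adds no values and the factor $2$ drops.

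For the exponential condition, use $q_{N,\rho}(n)\le(1-\rho)^n$ and $\prod_{s<t}L_s\le(2(N+1)^k)^{T-1}$ for every $t\le T$. The sum over $t$ is then at most
$$T(1-\rho)^n\bigl(2(N+1)^k\bigr)^{T-1}.$$
Requiring this to be at most $\alpha$ and taking logarithms gives exactly equation~\eqref{eq:history-count}; without the acceptance bit the $\log2$ term is removed.
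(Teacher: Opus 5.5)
Your proposal is correct and follows essentially the same argument as the paper. You condition on $\xi$, treat each possible history's candidate as fixed independently of $S$, apply Proposition~\ref{prop:audit-probability} with a union bound over histories and rounds, then bound the history count by $\prod_{s<t}L_s$ and relax using $q_{N,\rho}(n)\leq(1-\rho)^{n}$ (calling the subset-count step an ``injection'' is a misnomer, but summing $\binom{N}{i}\leq N^{i}\leq\binom{k}{i}N^{i}$ via the binomial theorem is correct).
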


The exact bound uses both the number of histories and the finite-population endpoint. For $N=500$, $k=6$, $T=4$, and $\rho=\alpha=0.05$, equation~\eqref{eq:history-count} gives $2{,}308$ cases with the acceptance bit and $2{,}267$ without it. Equation~\eqref{eq:history-bound}, by contrast, is satisfied at $474$ and $473$ cases, respectively, while $475$ passing cases give a deterministic guarantee. Exact counting remains useful when the exponential approximation exceeds the population size.

The message count must reflect the information actually returned. For raw reported cases, the fixed candidate and the case determine the reference and candidate responses. Additional observations such as timing or a test score enlarge the possible messages. A deterministic reduction based only on each reported case and the candidate cannot increase that count, although the reduced input may still identify the original case. Appendix~\ref{app:proofs} gives the proof.

\paragraph{A lower bound from adaptive probing.}
The following construction uses disjoint regions of the input population. During the probing rounds, the candidate fails in one region so that feedback identifies its sampled members. The last candidate then fails on the parts of those regions that remain unreported.

\begin{theorem}[Adaptive probing lower bound]
Suppose the candidate class is unrestricted casewise, the agent knows $\calU$ and its specified outputs, and the feedback rule returns every previously unreported failing case of $S$ whenever there are at most $k$ of them. Let $r=T-1$, and let integer $m$ satisfy $rm\leq N$ and $r(m-k)>\rho N$. If feedback continues for all $T$ prescribed rounds, there is a policy whose round-$T$ candidate is accepted with risk above $\rho$ with probability at least $\max\{0,1-r\,p(m)\}$, where $p(m)=\Pr\bigl(\mathrm{Hyp}(N,m,n)>k\bigr)$ is the upper tail of the hypergeometric count of sampled cases in a region of size $m$.
\label{thm:probe}
\end{theorem}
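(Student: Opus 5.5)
The construction partitions part of the population into disjoint probe regions and uses the probing rounds to learn which of their cases lie in $S$. Since $rm\leq N$, we can fix disjoint regions $M_1,\ldots,M_r\subseteq\calU$, each of size $m$, before any sampling; the policy is deterministic given this choice. In probing round $t\in\{1,\ldots,r\}$ the agent submits the casewise candidate $f_t$ that fails exactly on $M_t\setminus\bigcup_{s<t}E_s$ and returns the specified output elsewhere. The subtraction is vacuous, because the regions are disjoint and all reported cases from earlier rounds lie in earlier regions. Hence the failing cases of $S$ for $f_t$ are exactly $S\cap M_t$, none of which were reported before.

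\emph{Good event.} Let $\mathcal{G}=\bigcap_{t=1}^{r}\{|S\cap M_t|\leq k\}$. On $\mathcal{G}$, the feedback rule in the hypothesis returns every failing case of $S$ in round $t$, so $E_t=S\cap M_t$. After $r$ rounds the agent therefore knows $S\cap M$ exactly, where $M=\bigcup_t M_t$. We also need the agent to recognize $\mathcal{G}$ from the feedback. If $|E_t|<k$, or if the feedback rule signals completeness, this is immediate. Otherwise I would simply define the final candidate from the reported sets regardless: a miss of $\mathcal{G}$ is only counted as failure, so the construction needs no detection.

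\emph{Final candidate.} In round $T$ the agent submits $f_T$, which returns the specified output on $\bigcup_t E_t$ and on $\calU\setminus M$, and fails on $M\setminus\bigcup_t E_t$. On $\mathcal{G}$, every case of $S$ that lies in $M$ belongs to $\bigcup_t E_t$. Thus $f_T$ passes $S$ and is accepted. Its failure count is
\[
\sum_t\bigl(m-|E_t|\bigr)\geq r(m-k)>\rho N,
\]
so $R(f_T)>\rho$. The hypothesis that feedback continues for all $T$ rounds ensures that an acceptance in an earlier round does not stop the trajectory.

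\emph{Probability.} Each $|S\cap M_t|$ is marginally $\mathrm{Hyp}(N,m,n)$ under uniform sampling without replacement. A union bound gives $\Pr(\mathcal{G}^c)\leq r\,p(m)$, so the success probability is at least $\max\{0,1-r\,p(m)\}$.

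The main delicacy is the interaction with the feedback rule when some region holds more than $k$ sampled cases. The construction does not depend on what is returned then, since such outcomes are charged to $\mathcal{G}^c$. I would state explicitly that, on $\mathcal{G}$, the reported sets coincide with $S\cap M_t$ independently of any tie-breaking.
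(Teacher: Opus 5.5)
Your proposal is correct and follows essentially the same argument as the paper. It uses disjoint probe regions of size $m$ whose sampled members are all reported on the event $\bigcap_{t}\{|S\cap M_t|\leq k\}$, a terminal candidate failing on $\bigcup_t(M_t\setminus E_t)$ with at least $r(m-k)>\rho N$ failures, and a union bound over the $r$ marginally hypergeometric occupancies. Your remark about detecting $\mathcal{G}$ from $|E_t|<k$ is not guaranteed by the stated feedback rule, but as you note, the construction does not need it.
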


\begin{corollary}[A sufficient choice of probe size]
Let $T\geq2$, $k\geq1$, and $n\geq1$, and set $r=T-1$ and $m=\lfloor kN/(2n)\rfloor$. If $rm\leq N$ and $r(m-k)>\rho N$ hold, Theorem~\ref{thm:probe} applies with a hypergeometric count whose mean is $nm/N\leq k/2$. Ignoring integer rounding, the risk condition is equivalent to
\begin{equation}
n<n^{*}:=\Bigl(\frac{2\rho}{rk}+\frac{2}{N}\Bigr)^{-1},
\label{eq:nstar}
\end{equation}
and $n\geq rk/2$ suffices for the disjoint regions to fit in the population.
\label{cor:nstar}
\end{corollary}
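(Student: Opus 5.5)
The corollary is a direct specialization of Theorem~\ref{thm:probe}, so the plan is to check that the prescribed $m=\lfloor kN/(2n)\rfloor$ meets the theorem's two integer conditions and then rewrite those conditions in closed form. The hypotheses $rm\leq N$ and $r(m-k)>\rho N$ are assumed in the statement. Theorem~\ref{thm:probe} therefore applies verbatim and gives the success probability $\max\{0,1-r\,p(m)\}$. What remains is to bound the mean of the hypergeometric count and to derive the two equivalences stated after ``ignoring integer rounding.''

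\textbf{Mean of the count.} The mean of $\mathrm{Hyp}(N,m,n)$ is $nm/N$. Since $m\leq kN/(2n)$, this gives $nm/N\leq k/2$. This half-of-threshold mean is what makes $p(m)=\Pr(\mathrm{Hyp}>k)$ small, and one could bound it further with a hypergeometric Chernoff bound, but the corollary claims only the mean bound.

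\textbf{Risk condition.} Ignoring rounding, I replace $m$ by $kN/(2n)$ in $r(m-k)>\rho N$, which gives $rkN/(2n)-rk>\rho N$. Dividing by $rkN>0$ yields $1/(2n)>\rho/(rk)+1/N$. Since both sides are positive, this is equivalent to $n<\bigl(2\rho/(rk)+2/N\bigr)^{-1}=n^{*}$, which is equation~\eqref{eq:nstar}.

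\textbf{Fit condition.} I substitute the same value into $rm\leq N$ and get $rkN/(2n)\leq N$, which is $n\geq rk/2$. With the floor applied, $m$ is no larger, so $n\geq rk/2$ is sufficient rather than merely equivalent.

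\textbf{Main obstacle.} The only delicate point is the status of the rounding. The floor only decreases $m$. That helps the fit condition, which is why it appears as a sufficient condition. It hurts the risk condition by at most $r$, which is why that equivalence is stated only up to rounding. I would make both directions explicit and keep the exact integer hypotheses as the ones actually invoked in Theorem~\ref{thm:probe}.
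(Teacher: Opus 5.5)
Your proposal is correct and follows the paper's own argument. It obtains the mean bound from $m\leq kN/(2n)$, substitutes $kN/(2n)$ into $r(m-k)>\rho N$ and rearranges to $n<n^{*}$, and uses the chain $rm\leq rkN/(2n)\leq N$ under $n\geq rk/2$. Your added remark that the floor lowers $r(m-k)$ by less than $r$ is a harmless sharpening of the paper's ``ignoring integer rounding.''
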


This choice of $m$ yields $n^{*}\approx104.65$ when $N=500$, $k=6$, $T=4$, and $\rho=0.05$. After rounding, it meets the risk condition through $n\leq100$. Choosing $m=15$ directly also permits larger values of $n$, with the resulting probabilities calculated in Section~\ref{sec:numerics}. The sufficient bound covers all policies satisfying the message-count assumption, while the lower bound gives the success probability of the specified probing policy.

\begin{remark}[Case identity after reduction]
Reducing $(\mathrm{query}(i),\mathrm{noop})$ to $(\mathrm{query}(i))$ preserves the identifying value $i$. A shorter input can therefore reveal its original population case. The information relevant to Theorem~\ref{thm:feedback-history} depends on the mapping from original cases to feedback, including how many cases can yield the same message, rather than on the fraction of traces that remain unchanged.
\label{rem:reduction}
\end{remark}

\subsection{Error Bounds for Generated Tests}
\label{sec:dynamic-theory}

Generating new tests after a candidate is fixed changes the acceptance analysis when their sampling distribution is known. Uniformly sampled batches give a direct bound on the current candidate, whereas targeted generation requires a relation between its selection distribution and the target population.

\begin{proposition}[Uniform generation in each round]
Suppose each round draws a batch $S^{\mathrm{new}}_t$ of $m$ cases uniformly without replacement after commitment, conditionally independently of the candidate given the previous history. Repetition across rounds is allowed, and $V_t=S^{\mathrm{new}}_t\setminus G_{t-1}$ contains the new cases. If acceptance requires all cumulative regression tests to pass, then for every round we have
\begin{equation}
\Pr\bigl(f_t\text{ passes }G_t,\ R(f_t)>\rho\mid\calH_t,f_t\bigr)\leq q_{N,\rho}(m)\leq(1-\rho)^{m},
\label{eq:dynamic-bound}
\end{equation}
and the probability of any false acceptance over $T$ rounds is at most $T\,q_{N,\rho}(m)$.
\label{prop:dynamic}
\end{proposition}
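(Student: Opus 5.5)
The plan is to condition on $\calH_t$ and $f_t$, reduce the event to a statement about the fresh batch alone, and then apply Proposition~\ref{prop:audit-probability} to a candidate that is fixed under this conditioning.

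\textbf{Step 1: reduce to the fresh batch.} By equation~\eqref{eq:regression-tests}, $G_t=G_{t-1}\cup V_t\cup E_t$. Moreover $V_t=S^{\mathrm{new}}_t\setminus G_{t-1}$, so $S^{\mathrm{new}}_t\subseteq G_{t-1}\cup V_t\subseteq G_t$. Passing every test in $G_t$ therefore implies passing every case in $S^{\mathrm{new}}_t$, which gives
\[
\{f_t\text{ passes }G_t,\ R(f_t)>\rho\}\subseteq\{f_t\text{ passes }S^{\mathrm{new}}_t,\ R(f_t)>\rho\}.
\]
This handles repetition across rounds. A case of the fresh batch already in $G_{t-1}$ is still tested, because it is re-executed as a regression test. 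Reduced counterexamples in $E_t$ may lie outside $\calU$, but they only add requirements, so they can be dropped from the upper bound.

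\textbf{Step 2: apply the fixed-candidate bound.} Conditional on $(\calH_t,f_t)$, the candidate is fixed by equation~\eqref{eq:adaptive-candidate} and the commitment $C_t$. Its loss vector on $\calU$ is fixed by the fixed-execution assumption, so $K(f_t)$ and the event $R(f_t)>\rho$ are determined. By hypothesis, $S^{\mathrm{new}}_t$ is conditionally a uniform size-$m$ sample without replacement, independent of $f_t$.

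If $R(f_t)\le\rho$, the conditional probability is zero. Otherwise Proposition~\ref{prop:audit-probability} gives
\[
\Pr(f_t\text{ passes }S^{\mathrm{new}}_t\mid\calH_t,f_t)\le q_{N,\rho}(m)\le(1-\rho)^m.
\]
This proves equation~\eqref{eq:dynamic-bound}.

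\textbf{Step 3: the main obstacle and the union over rounds.} The delicate point is that $\calH_t$ contains earlier batches and their feedback. The argument therefore needs the conditional-independence hypothesis to hold given exactly the $\sigma$-field that determines $f_t$. The hypothesis is stated this way, so the conditional bound is legitimate. Taking expectations gives an unconditional per-round bound of $q_{N,\rho}(m)$. A union bound over $t=1,\ldots,T$ then bounds the probability of any false acceptance by $T\,q_{N,\rho}(m)$.

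Under a stopping policy, rounds that are never reached contribute nothing. The event for round $t$ can be intersected with the event that round $t$ occurs, which is $\calH_t$-measurable, so the same conditional bound applies to each round.
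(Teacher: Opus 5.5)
Your proposal is correct and follows essentially the same route as the paper's proof: the containment $S^{\mathrm{new}}_t\subseteq G_{t-1}\cup V_t\subseteq G_t$ reduces passing $G_t$ to passing the fresh batch, Proposition~\ref{prop:audit-probability} is applied conditionally on the history that fixes $f_t$, and a union bound over the $T$ rounds gives $T\,q_{N,\rho}(m)$. Your remarks on reductions lying outside $\calU$ and on rounds excluded by stopping are consistent refinements of the paper's observation that requiring additional tests to pass can only reduce acceptance.
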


\begin{proof}
Passing $G_t$ implies passing the new batch because $S^{\mathrm{new}}_t\subseteq G_{t-1}\cup V_t\subseteq G_t$. Proposition~\ref{prop:audit-probability} applies conditionally in each round, and a union bound gives the result over $T$ rounds. Requiring additional tests to pass can only reduce acceptance.
\end{proof}

An overall error budget $\alpha$ over $T$ rounds requires $q_{N,\rho}(m)\leq\alpha/T$. For $N=500$, $\rho=0.05$, and $m=12$, the exact single-round bound is approximately $0.523$, so this batch size yields a weak acceptance criterion. The role of the distribution becomes explicit in the next result.

\begin{proposition}[Candidate-dependent test selection]
Let $Q_f$ be the distribution used to select an input after observing candidate $f$. Without a specified relation between $Q_f$ and the uniform distribution on $\calU$, the expected selected-test loss $\E_{Z\sim Q_f}[\ell(f;Z)]$ does not identify $R(f)$.
\label{prop:selection}
\end{proposition}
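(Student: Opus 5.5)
The plan is a non-identifiability construction: exhibit two candidates, or two selection rules, whose expected selected-test losses coincide while their family risks differ. It suffices to show that for any fixed value of the expected selected-test loss, candidates with widely different $R(f)$ can produce that value under admissible $Q_f$. I will work in the unrestricted casewise class used throughout Section~\ref{sec:theory}, so that every binary loss pattern on $\calU$ is realizable. I will also assume $N\geq2$, so that nontrivial patterns exist.

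First, take any candidate $f$ with $0<K(f)<N$. Let $\calU_0$ be its passing cases and $\calU_1$ its failing cases. Write $Q_f$ for a distribution on $\calU$, not necessarily uniform, chosen after observing $f$. Then
\[
\E_{Z\sim Q_f}[\ell(f;Z)]=Q_f(\calU_1),
\]
and this can be any value in $[0,1]$. For example, a targeted tester that puts all its mass on $\calU_1$ gives the value one, while one supported on $\calU_0$ gives zero. Meanwhile $R(f)=K(f)/N$ is fixed and strictly between zero and one.

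Second, to make non-identification explicit, fix a target value $v\in[0,1]$ and two candidates $f,f'$ whose risks differ, for instance $K(f)=1$ and $K(f')=N-1$. For each, choose the mixture of the uniform distributions on its passing and failing sets with weight $v$ on the failing set. Both then have expected selected-test loss exactly $v$, yet $R(f)=1/N$ and $R(f')=1-1/N$. One of these risks lies below $\rho$ and the other above it whenever $1/N\leq\rho<1-1/N$. The same expected loss is therefore compatible with a good and a bad candidate, so no function of that quantity alone determines $R(f)$, or even whether $R(f)>\rho$.

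The only subtle step is stating precisely what "without a specified relation" excludes. I would contrast the construction with the uniform case, where $Q_f$ is uniform on $\calU$ and the expectation equals $R(f)$, and more generally with a known likelihood ratio $dQ_f/d\mathrm{Unif}$ that would permit importance weighting. The construction above uses exactly the freedom to choose $Q_f$ depending on the loss pattern of $f$, which such a relation would remove.
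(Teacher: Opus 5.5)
Your proof is correct, but it takes a different route from the paper's.

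\textbf{The paper's route.} The paper fixes one selector concentrated on a nonempty strict subset of $\calU$. It then takes two candidates that agree on that subset and differ on its complement. Their selected-test losses coincide while their population risks differ. The obstruction there is missing support, and it persists even when the selector is known exactly and does not depend on $f$.

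\textbf{Your route.} You fix two candidates with different risks and tailor $Q_f$ to each candidate's loss pattern. Your obstruction is therefore the unspecified reweighting between $Q_f$ and the uniform distribution. You state this yourself: with a known likelihood ratio and full support, importance weighting would recover $R(f)$.

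\textbf{What each approach buys.} For $0<v<1$, each of your mixtures puts positive mass on every case. So your construction shows that full support alone does not restore identification. This fits the paper's coverage remark, which yields only the one-sided bound $\E_{Z\sim Q_f}[\ell(f;Z)]\geq\kappa R(f)$. Your construction also shows that every value $v$ is compatible with both a good and a bad candidate. In exchange, you give up the paper's stronger point: a fully specified, candidate-independent selector that lacks coverage already fails. Your non-identification, by contrast, relies on the hypothesis that the relation between $Q_f$ and the uniform distribution is unspecified. That is legitimate, since it is exactly what the statement assumes.

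\textbf{A small slip.} The pair $K(f)=1$, $K(f')=N-1$ has distinct risks only when $N\geq3$, and the range $1/N\leq\rho<1-1/N$ is empty at $N=2$. Either assume $N\geq3$, or handle $N=2$ directly. For example, take $K(f)=0$ and $K(f')=1$, and let each selector be uniform on the candidate's passing cases. Both expected losses are then zero, while the risks are $0$ and $1/2$.
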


\begin{proof}
A permissible selector can concentrate on a strict subset of $\calU$. Candidates that agree on that subset and differ on its complement have the same selected-test loss and different population risks.
\end{proof}

Targeting suspected defects can increase the number of failures found, which is the development objective of an informed tester. To infer population risk from such tests, a coverage condition is needed. For example, if a known $\kappa\in(0,1]$ satisfies $Q_f(z)\geq\kappa/N$ for every $z\in\calU$, then we have $\E_{Z\sim Q_f}[\ell(f;Z)]\geq\kappa\,R(f)$. GTDD measures targeted generation through the failures and repairs it produces, and uses an explicit random sampling rule for acceptance.

\subsection{Sequential Error Control}
\label{sec:fresh-theory}

Fresh audits permit valid acceptance decisions even when earlier tests have influenced candidate selection. We first control errors over a fixed collection of trajectories and then extend that control to an unbounded number of rounds. The bound for accepting a complete repository is sharper than the bound for reporting simultaneous conclusions about every requirement family.

Let $a=1,\ldots,M$ index trajectories with at most $T$ rounds and $J$ families. Use $A_{a,t,j}$ for the family audit indicator and $Q_{a,t}$ for target accuracy, extending the notation of Section~\ref{sec:setup}. Let $N_{a,j}$ be the family population size and $\rho_{a,t,j}$ the tolerance used in $Q_{a,t}$. In GTDD, this is the issue-family tolerance fixed before development. Denote by $B_{a,t,j}$ the event that family $j$ exceeds its tolerance but passes its audit. The sigma-field $\calF^{-}_{a,t}$ contains all information available to the trusted harness just before sampling, including the committed candidate.

\begin{assumption}[Sampling after candidate commitment]
Conditional on $\calF^{-}_{a,t}$, sample $S_{a,t,j}$ is uniform without replacement from its fixed population. Its size $n_{a,t,j}$ is chosen before the draw, and the candidate's loss function and tolerance $\rho_{a,t,j}$ are fixed. Dependence between samples from different families is allowed.
\label{ass:post-commit}
\end{assumption}

\paragraph{Simultaneous family conclusions.}
The probability that any above-tolerance family passes its audit is bounded by summing its conditional error probabilities across families, rounds, and trajectories.

\begin{theorem}[Sequential family-wise error control]
Under Assumption~\ref{ass:post-commit}, we have
\begin{equation}
\Pr\Bigl(\bigcup_{a=1}^{M}\bigcup_{t=1}^{T}\bigcup_{j=1}^{J}B_{a,t,j}\Bigr)
\leq\E\Bigl[\sum_{a=1}^{M}\sum_{t=1}^{T}\sum_{j=1}^{J}q_{N_{a,j},\rho_{a,t,j}}(n_{a,t,j})\Bigr].
\label{eq:sequential-bound}
\end{equation}
For fixed sizes and tolerances, the expectation is unnecessary, and each term is at most $(1-\rho_{a,t,j})^{n_{a,t,j}}$.
\label{thm:familywise}
\end{theorem}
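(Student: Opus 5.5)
The plan is a union bound over all triples $(a,t,j)$, followed by conditioning each term on the harness information $\calF^{-}_{a,t}$ just before sampling. The order is: apply the union bound, condition each term on $\calF^{-}_{a,t}$, invoke Proposition~\ref{prop:audit-probability} with the candidate held fixed, and then handle the random number of rounds.

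\textbf{Union bound.} First,
\[
\Pr\Bigl(\bigcup_{a,t,j}B_{a,t,j}\Bigr)\leq\sum_{a=1}^{M}\sum_{t=1}^{T}\sum_{j=1}^{J}\Pr(B_{a,t,j}).
\]
Only the marginal probability of each event enters, so any dependence between samples from different families, rounds, or trajectories is irrelevant. This is why Assumption~\ref{ass:post-commit} may permit that dependence.

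\textbf{Conditional bound for one triple.} Fix $(a,t,j)$ and write $\Pr(B_{a,t,j})=\E[\Pr(B_{a,t,j}\mid\calF^{-}_{a,t})]$. Conditional on $\calF^{-}_{a,t}$, three things are fixed:
\begin{itemize}
\item the committed candidate $f_{a,t}$, so its loss function $\ell(f_{a,t};\cdot)$ and failure count $K$ on family $j$ are fixed;
\item the tolerance $\rho_{a,t,j}$;
\item the sample size $n_{a,t,j}$, which is chosen before the draw.
\end{itemize}
In addition, $S_{a,t,j}$ is uniform without replacement from the fixed population of size $N_{a,j}$. Proposition~\ref{prop:audit-probability} therefore applies pointwise on each conditioning event.

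On the $\calF^{-}_{a,t}$-measurable event $\{R_{a,j}(f_{a,t})>\rho_{a,t,j}\}$, the conditional probability of passing the audit is at most $q_{N_{a,j},\rho_{a,t,j}}(n_{a,t,j})$. Off this event, $B_{a,t,j}$ cannot occur, so its conditional probability is zero. In both cases,
\[
\Pr(B_{a,t,j}\mid\calF^{-}_{a,t})\leq q_{N_{a,j},\rho_{a,t,j}}(n_{a,t,j}).
\]
Taking expectations and summing over all triples gives equation~\eqref{eq:sequential-bound}.

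\textbf{Fixed sizes and tolerances.} When $n_{a,t,j}$ and $\rho_{a,t,j}$ are deterministic, the right-hand side of equation~\eqref{eq:sequential-bound} is deterministic, so the expectation can be dropped. The bound $q_{N,\rho}(n)\leq(1-\rho)^n$ is the last inequality of Proposition~\ref{prop:audit-probability}.

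\textbf{Main obstacle: trajectories with fewer than $T$ rounds.} The delicate point is measurability when a trajectory stops early. Here I would set $B_{a,t,j}=\varnothing$ for rounds that are not reached and take the corresponding term to be zero. Whether round $t$ is reached, and the value of $n_{a,t,j}$, are both determined before the draw and are thus $\calF^{-}_{a,t}$-measurable. The conditional argument above therefore still holds on the event that the round is executed.

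I would also verify that earlier audit outcomes and feedback affect the round-$t$ candidate only through $\calF^{-}_{a,t}$. This is exactly what makes the conditional sampling assumption sufficient, with no need to count feedback histories as in Theorem~\ref{thm:feedback-history}.
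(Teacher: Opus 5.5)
Your proposal is correct and follows essentially the same route as the paper: condition on $\calF^{-}_{a,t}$ so that the committed loss function, tolerance, and sample size are fixed, and apply Proposition~\ref{prop:audit-probability} to bound the conditional probability that an above-tolerance family passes. The paper then takes expectations and a union bound, with rounds excluded by stopping contributing empty events; your handling of within-tolerance families (conditional probability zero) and of the $\calF^{-}_{a,t}$-measurability of the stopping decision and sample size spells out steps the paper leaves implicit.
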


\begin{proof}
The committed loss function determines whether a family exceeds its tolerance. Conditional on $\calF^{-}_{a,t}$, Proposition~\ref{prop:audit-probability} bounds the probability that such a family passes by $q_{N_{a,j},\rho_{a,t,j}}(n_{a,t,j})$. Taking expectations and a union bound proves equation~\eqref{eq:sequential-bound}. Rounds excluded by a stopping rule contribute zero event indicators.
\end{proof}

\begin{corollary}[Family-wise sample size]
For common $N$, $\rho$, and $n$, the condition $q_{N,\rho}(n)\leq\alpha/(MTJ)$ suffices. A sufficient exponential condition is
\begin{equation}
n\geq\frac{\log(MTJ/\alpha)}{\log\bigl(1/(1-\rho)\bigr)}.
\label{eq:sample-size}
\end{equation}
For $N=500$, $M=8$, $T=4$, $J=5$, and $\rho=\alpha=0.05$, the smallest exact size is $131$, whereas equation~\eqref{eq:sample-size} gives $158$.
\label{cor:sample-size}
\end{corollary}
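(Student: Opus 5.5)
The plan is to specialize Theorem~\ref{thm:familywise} and then treat the sample size as a purely one-dimensional numerical question.

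\textbf{Step 1: reduce to one inequality.} With common $N$, $\rho$, and $n$ fixed before sampling, every summand on the right side of equation~\eqref{eq:sequential-bound} equals $q_{N,\rho}(n)$. The expectation is therefore unnecessary, and the bound becomes $MTJ\,q_{N,\rho}(n)$. Requiring $q_{N,\rho}(n)\leq\alpha/(MTJ)$ then makes the family-wise error at most $\alpha$. Rounds removed by stopping contribute zero, so they only help.

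\textbf{Step 2: the exponential condition.} Proposition~\ref{prop:audit-probability} gives $q_{N,\rho}(n)\leq(1-\rho)^n$. It therefore suffices that $(1-\rho)^n\leq\alpha/(MTJ)$. Taking logarithms and dividing by $\log(1/(1-\rho))>0$ gives exactly equation~\eqref{eq:sample-size}.

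\textbf{Step 3: the numerical example.} Here $MTJ=160$, so the target is $\alpha/(MTJ)=0.05/160=3.125\times10^{-4}$.

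For the exponential condition, compute $\log(3200)/\log(1/0.95)$. Since $\log 3200\approx 8.0709$ and $\log(1/0.95)\approx 0.051293$, the ratio is about $157.35$. This gives $158$.

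For the exact size, $K_\rho=\lfloor 25\rfloor+1=26$, so
\[
q_{500,0.05}(n)=\prod_{r=0}^{n-1}\frac{474-r}{500-r}.
\]
The key observation is that $q_{500,0.05}(n)$ is nonincreasing in $n$, because each factor lies in $(0,1]$ (or the product is already zero). Hence the smallest admissible $n$ is well defined. Establishing it reduces to checking $q(130)>3.125\times10^{-4}\geq q(131)$.

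A convenient evaluation rewrites $q(n)$ as $\binom{474}{n}/\binom{500}{n}$, equivalently
\[
q(n)=\prod_{i=0}^{25}\frac{500-n-i}{500-i}.
\]
This is a product of only $26$ factors, which can be evaluated via a log-sum. As a sanity check, the geometric approximation $\bigl((500-n-12.5)/487.5\bigr)^{26}$ at $n=131$ gives $\approx(0.7313)^{26}$, whose logarithm is about $-8.13$. This is just past $\log(3.125\times10^{-4})\approx -8.07$, so $131$ is plausible.

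\textbf{Main obstacle.} The only real difficulty is this exact boundary verification. The approximation above is not rigorous, and the margin is small. I would therefore compute both $\sum_{i=0}^{25}\log\bigl((500-n-i)/(500-i)\bigr)$ values explicitly, at $n=130$ and $n=131$, to enough precision to confirm they straddle $-8.0709$.
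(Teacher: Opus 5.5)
Your proposal is correct and follows the paper's route. You specialize Theorem~\ref{thm:familywise} to $MTJ\,q_{N,\rho}(n)$, pass to $(1-\rho)^n$ for equation~\eqref{eq:sample-size}, and settle the example by exact evaluation, which the paper reports as $q_{500,0.05}(131)\approx0.000291773\leq0.05/160<q_{500,0.05}(130)\approx0.000313825$. The boundary check you flag is indeed the only delicate point, since the log-gap at $n=130$ is only about $0.004$; using $q(131)=q(130)\cdot 344/370$ reduces it to a single product evaluation.
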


\paragraph{Acceptance of the complete repository.}
If acceptance requires all family audits to pass, any bad candidate must pass the audit of at least one family known before sampling to be above tolerance. This observation replaces the sum over families by their largest conditional bound. Let $\mathrm{Accept}_{a,t}$ denote the indicator that the repository is accepted.

\begin{theorem}[Joint repository acceptance]
Suppose acceptance requires $A_{a,t,j}=1$ for every family, possibly together with passing public and saved development tests. Under Assumption~\ref{ass:post-commit}, we have
\begin{equation}
\begin{split}
&\Pr\bigl(\exists\,a\in\{1,\ldots,M\},\ t\in\{1,\ldots,T\}:\ \mathrm{Accept}_{a,t}=1,\ Q_{a,t}=0\bigr)\\
&\qquad\leq\E\Bigl[\sum_{a=1}^{M}\sum_{t=1}^{T}\max_{1\leq j\leq J}q_{N_{a,j},\rho_{a,t,j}}(n_{a,t,j})\Bigr].
\end{split}
\label{eq:joint-bound}
\end{equation}
\label{thm:joint}
\end{theorem}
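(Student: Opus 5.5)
The plan is to reduce the repository-level event to a single family per round, chosen before sampling, and then apply Proposition~\ref{prop:audit-probability} conditionally on $\calF^{-}_{a,t}$. Fix a trajectory $a$ and a round $t$. Because the committed loss function and the tolerances are $\calF^{-}_{a,t}$-measurable, so is the set $J^{\mathrm{bad}}_{a,t}=\{j:R_{a,j}(f_{a,t})>\rho_{a,t,j}\}$. By definition, $Q_{a,t}=0$ exactly when this set is nonempty. I would choose an $\calF^{-}_{a,t}$-measurable index $j^{*}_{a,t}$ in $J^{\mathrm{bad}}_{a,t}$ whenever it is nonempty, for instance its smallest element.

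Next comes the inclusion of events. Acceptance requires $A_{a,t,j}=1$ for every $j$, so
\[
\{\mathrm{Accept}_{a,t}=1,\ Q_{a,t}=0\}\subseteq\{Q_{a,t}=0,\ A_{a,t,j^{*}_{a,t}}=1\}=B_{a,t,j^{*}_{a,t}}.
\]
Any extra requirements, such as passing public or saved development tests, only shrink the left side. Conditional on $\calF^{-}_{a,t}$, the index $j^{*}$ is fixed, and $S_{a,t,j^{*}}$ is uniform without replacement of the predetermined size $n_{a,t,j^{*}}$ by Assumption~\ref{ass:post-commit}. On $\{Q_{a,t}=0\}$, Proposition~\ref{prop:audit-probability} therefore gives
\[
\Pr\bigl(B_{a,t,j^{*}}\mid\calF^{-}_{a,t}\bigr)\leq q_{N_{a,j^{*}},\rho_{a,t,j^{*}}}(n_{a,t,j^{*}})\leq\max_{j}q_{N_{a,j},\rho_{a,t,j}}(n_{a,t,j}).
\]
On $\{Q_{a,t}=0\}^{c}$ the conditional probability is zero. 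Dependence between the samples of different families does not matter, because only one family's marginal sample is used.

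Finally, I would take expectations to remove the conditioning and apply a union bound over $a$ and $t$. Rounds excluded by a stopping rule contribute zero indicators, as in the proof of Theorem~\ref{thm:familywise}. The subtle step is the measurability of the selected family. The choice must be made with respect to $\calF^{-}_{a,t}$ and must not depend on the samples, since choosing the family after seeing the audit outcomes would break the conditional uniformity. This is guaranteed because $R$ is determined by the committed candidate and the fixed population.
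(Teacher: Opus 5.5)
Your proposal is correct and follows essentially the same route as the paper's proof: condition on $\calF^{-}_{a,t}$, fix the lowest-indexed above-tolerance family before the samples are drawn, use that repository acceptance forces that family's audit to pass, bound this by Proposition~\ref{prop:audit-probability} and then by the maximum, and finish with expectations and a union bound over trajectories and rounds. Your explicit points about the $\calF^{-}_{a,t}$-measurability of $j^{*}$ and about cross-family sample dependence being harmless spell out what the paper states more briefly.
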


\begin{proof}
Condition on $\calF^{-}_{a,t}$. For a bad candidate, let $j^{*}$ be the lowest-indexed family whose risk exceeds its tolerance. This choice depends on the fixed candidate and population, before the new samples are drawn. Repository acceptance implies that family $j^{*}$ passes, with conditional probability at most $q_{N_{a,j^{*}},\rho_{a,t,j^{*}}}(n_{a,t,j^{*}})$, which is bounded by the displayed maximum. For a candidate within tolerance in every family, the false-acceptance event is empty. Taking expectations and a union bound over trajectories and rounds proves the result.
\end{proof}

With common parameters, joint acceptance requires $q_{N,\rho}(n)\leq\alpha/(MT)$, so the preceding setting needs $108$ cases per family. The number of families still determines the execution cost of $Jn$ cases per round. The distinction between this event for the whole repository and simultaneous family conclusions accounts for the different sample sizes.

\begin{corollary}[Comparison with known tests]
In the unrestricted casewise class, a known fixed test set needs $N-\lfloor\rho N\rfloor$ cases to exclude every accepted candidate above tolerance. For one fixed candidate and one family, a uniform audit after commitment requires $q_{N,\rho}(n)\leq\alpha$. The exponential count $\log(1/\alpha)/\log(1/(1-\rho))$ suffices whenever it is feasible, and the finite-population endpoint always suffices. At $N=500$ and $\rho=\alpha=0.05$, the smallest exact audit size is $54$, compared with $475$ known tests.
\label{cor:scale-vs-fresh}
\end{corollary}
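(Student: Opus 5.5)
The corollary has two halves, and each follows from an earlier result. I would prove them separately and then check the numbers.

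\emph{Known fixed test set.} This part is a direct restatement of Proposition~\ref{prop:static-size}. With $|S|=n$ known tests, the largest accepted risk in the unrestricted casewise class is $1-n/N$, and that maximum is attained by the candidate of Proposition~\ref{prop:known-tests}. Excluding every accepted candidate above tolerance is therefore equivalent to $n\geq N-\lfloor\rho N\rfloor$. At $N=500$ and $\rho=0.05$ we have $\lfloor\rho N\rfloor=25$, which gives $475$.

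\emph{Fresh audit.} Take Theorem~\ref{thm:joint}, or equivalently Theorem~\ref{thm:familywise}, with $M=T=J=1$ and a fixed size. Under Assumption~\ref{ass:post-commit}, false acceptance has probability at most $q_{N,\rho}(n)$, so $q_{N,\rho}(n)\leq\alpha$ suffices. For a fixed candidate, Proposition~\ref{prop:audit-probability} shows this bound is exactly attained by a candidate with $K_\rho$ failures. The condition is therefore also necessary for a guarantee over all bad candidates, which justifies the word ``requires.''

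Two routes then give $q_{N,\rho}(n)\leq\alpha$. First, Proposition~\ref{prop:audit-probability} gives $q_{N,\rho}(n)\leq(1-\rho)^n$. So whenever $n\geq\log(1/\alpha)/\log(1/(1-\rho))$ and this value does not exceed $N$ (the feasible case), we get $q_{N,\rho}(n)\leq\alpha$. Second, at the endpoint $n=N-K_\rho+1=N-\lfloor\rho N\rfloor$ the binomial $\binom{N-K_\rho}{n}$ vanishes, so $q_{N,\rho}(n)=0\leq\alpha$. This makes the endpoint always sufficient. It is worth noting that it coincides with the known-test threshold, which shows that the fresh audit is never worse.

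\emph{Numerical value.} The only real work is showing that $54$ is the smallest exact size. Here $K_\rho=26$, and $q_{500,0.05}(n)=\prod_{r=0}^{n-1}(474-r)/(500-r)$ is strictly decreasing in $n$. It therefore suffices to check $q(53)>0.05\geq q(54)$. I would compute this via $\sum_r\log\bigl(1-26/(500-r)\bigr)$. A rough check: the exponential count is $\log 20/\log(1/0.95)\approx58.4$, and the finite-population correction lowers it to the mid-fifties, consistent with $54$. I expect this direct evaluation of the two products to be the main obstacle, mainly in carrying enough precision to separate $n=53$ from $n=54$. The structural argument itself is immediate from the cited results.
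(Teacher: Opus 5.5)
Your proposal is correct and follows the route the paper takes. The paper gives no separate proof: the known-test count is Proposition~\ref{prop:static-size}; the audit condition, the exponential count and the endpoint come from Proposition~\ref{prop:audit-probability}; and the value $54$ is the exact integer computation reported in Table~\ref{tab:sizes}. For your numerical step, you can avoid long products by combining $q_{500,0.05}(54)\approx 0.0472$ with the one-step ratio $q_{500,0.05}(53)=q_{500,0.05}(54)\cdot 447/421\approx 0.0501>0.05$.
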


Theorems~\ref{thm:familywise} and~\ref{thm:joint} extend the single-candidate comparison to a prescribed horizon and allow arbitrary stopping within that horizon.

\paragraph{An unbounded sequence of candidates.}
\label{sec:eprocess}
A summable allocation of error probabilities controls acceptance even when the total number of development rounds is not fixed in advance. Each allocation applies to the current candidate's new audit.

\begin{theorem}[Error allocation over an unbounded horizon]
Suppose Assumption~\ref{ass:post-commit} holds in every audited round. For $a\in\{1,\ldots,M\}$ and $t\geq1$, choose a nonnegative allocation $\varepsilon_{a,t}$ before sampling, with $\sum_{a=1}^{M}\sum_{t=1}^{\infty}\varepsilon_{a,t}\leq\alpha$ almost surely. Choose sample sizes satisfying $\max_{1\leq j\leq J}q_{N_{a,j},\rho_{a,t,j}}(n_{a,t,j})\leq\varepsilon_{a,t}$. If repository acceptance requires every family audit to pass, then for arbitrary adaptive candidate selection and stopping we have
\begin{equation}
\Pr\bigl(\exists\,a\in\{1,\ldots,M\},\ t\geq1:\ \mathrm{Accept}_{a,t}=1,\ Q_{a,t}=0\bigr)\leq\alpha.
\label{eq:allocation}
\end{equation}
\label{thm:allocation}
\end{theorem}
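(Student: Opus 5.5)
The argument repeats the proof of Theorem~\ref{thm:joint} round by round, with a countable union bound in place of the finite one. Define the bad-acceptance event for each trajectory and round,
\[
E_{a,t}=\{\mathrm{Accept}_{a,t}=1,\ Q_{a,t}=0\},
\]
with $E_{a,t}=\varnothing$ for rounds that are never reached because of stopping. Rounds that are not audited never accept under the stated rule, so their event is also empty. The target event is $\bigcup_{a=1}^{M}\bigcup_{t\geq1}E_{a,t}$. By countable subadditivity, its probability is at most $\sum_{a}\sum_{t}\Pr(E_{a,t})$, so it suffices to show $\Pr(E_{a,t})\leq\E[\varepsilon_{a,t}\1[\text{round }(a,t)\text{ is audited}]]$. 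Summing then gives at most $\E[\sum_{a,t}\varepsilon_{a,t}]\leq\alpha$, using the almost-sure summability and monotone convergence to exchange the sum and the expectation.

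The single-round bound comes from conditioning on $\calF^{-}_{a,t}$, exactly as in Theorem~\ref{thm:joint}. Several quantities are $\calF^{-}_{a,t}$-measurable: the committed loss function, the tolerances, the sample sizes, the allocation $\varepsilon_{a,t}$, and whether the round occurs at all. On the event that the candidate is bad, let $j^{*}$ be the lowest-indexed family above tolerance, which is also $\calF^{-}_{a,t}$-measurable. Acceptance requires $A_{a,t,j^{*}}=1$. Proposition~\ref{prop:audit-probability} applies conditionally under Assumption~\ref{ass:post-commit}, and bounds this conditional probability by $q_{N_{a,j^{*}},\rho_{a,t,j^{*}}}(n_{a,t,j^{*}})\leq\max_j q_{N_{a,j},\rho_{a,t,j}}(n_{a,t,j})\leq\varepsilon_{a,t}$. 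On the complementary event the candidate is within tolerance, so $E_{a,t}$ is empty. Taking expectations gives the per-round bound.

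The main obstacle is making the measurability bookkeeping rigorous over an infinite horizon with adaptive stopping. Two requirements must hold. First, $\varepsilon_{a,t}$ and $n_{a,t,j}$ must be fixed before the round-$t$ sample is drawn; the statement's ``before sampling'' provides this, so the allocation may depend on earlier outcomes but not on the current audit. Second, rounds after stopping, or rounds never audited, must contribute nothing. I would handle both by defining every quantity on the whole sample space, with the conventions $\varepsilon_{a,t}$ unchanged and $E_{a,t}=\varnothing$ after stopping. The condition $\sum\varepsilon_{a,t}\leq\alpha$ almost surely then bounds the expected sum even when the allocations are random. No optional-stopping or martingale argument is needed, because the union bound holds pathwise over all countably many potential rounds.
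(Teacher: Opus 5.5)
Your proposal is correct and follows the paper's proof. You bound the conditional false-acceptance probability of each reached round by $\varepsilon_{a,t}$ through the lowest-indexed bad family as in Theorem~\ref{thm:joint}, then combine a countable union bound (equivalent to the paper's indicator-sum inequality) with monotone convergence to obtain $\E[\sum_{a,t}\varepsilon_{a,t}]\leq\alpha$; your explicit bookkeeping on the $\calF^{-}_{a,t}$-measurability of $\varepsilon_{a,t}$, and on empty events for unreached or unaudited rounds, only spells out what the paper leaves implicit.
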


\begin{proof}
The proof of Theorem~\ref{thm:joint} bounds the conditional false-acceptance probability in each reached round by $\varepsilon_{a,t}$. The indicator of any false acceptance is at most the sum of the round indicators. Taking expectations and applying monotone convergence to the nonnegative infinite sum bounds the probability by $\E[\sum_{a=1}^{M}\sum_{t=1}^{\infty}\varepsilon_{a,t}]\leq\alpha$.
\end{proof}

For $M$ trajectories, $\varepsilon_{a,t}=\alpha/(Mt(t+1))$ has total mass $\alpha$. With $M$ and the tolerances fixed, the exponential sufficient sample count grows logarithmically as $t$ increases until the exact finite-population endpoint is smaller. The candidate may change arbitrarily between rounds. Allocating error to individual triples $(a,t,j)$ instead gives simultaneous family conclusions through Theorem~\ref{thm:familywise}.

\paragraph{Sequential testing within one candidate.}
For a fixed snapshot, a sequential test can inspect the audit one case at a time. Let $I_r$ indicate that the first $r$ draws of a uniformly random ordering of $\calU$ contain no failures, and define $W_r=I_r/q_{N,\rho}(r)$ for $0\leq r\leq N-K_\rho$. Under the null $K(f)\geq K_\rho$, the process $(W_r)$ is a nonnegative supermartingale starting at one. Given $r<N-K_\rho$ failure-free draws, the next draw passes with probability $(N-K(f)-r)/(N-r)$, which is at most $(N-K_\rho-r)/(N-r)=q_{N,\rho}(r+1)/q_{N,\rho}(r)$. Ville's inequality bounds by $\varepsilon$ the probability of ever reaching $W_r\geq1/\varepsilon$, for $\varepsilon\in(0,1)$ \citep{ville1939,ramdas2023}. At the endpoint, $N-K_\rho+1$ passing draws establish that the candidate is within tolerance directly. A new candidate starts its own process using the allocation in Theorem~\ref{thm:allocation}.

\paragraph{Evidence across different candidates.}
A product of audit factors from different implementations addresses a different hypothesis. For one family and audits with $q_{N,\rho}(n_t)>0$, let $A_t$ be the audit indicator for candidate $f_t$, and define
\begin{equation}
\eta_t=\frac{A_t}{q_{N,\rho}(n_t)},
\qquad
Z_t=\prod_{s=1}^{t}\eta_s,
\qquad
Z_0=1.
\label{eq:evalue}
\end{equation}

\begin{theorem}[E-process for a sequence of bad candidates]
Let $H_0$ be the hypothesis that every selected candidate satisfies $R(f_t)>\rho$ almost surely. Under $H_0$ and post-commit sampling, $(Z_t)$ is a nonnegative supermartingale, and we have
\begin{equation}
\Pr_{H_0}\bigl(\exists\,t\geq1:\ Z_t\geq1/\alpha\bigr)\leq\alpha.
\label{eq:ville}
\end{equation}
\label{thm:eprocess}
\end{theorem}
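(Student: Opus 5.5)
The plan is to verify the supermartingale property of $(Z_t)$ with respect to the filtration generated by the development history and the audits, and then invoke Ville's inequality. Let $\calF_t$ be the sigma-field containing everything available after the round-$t$ audit, and let $\calF^{-}_{t}\supseteq\calF_{t-1}$ be the information just before the round-$t$ draw, which includes the committed candidate $f_t$ and the pre-chosen size $n_t$. Nonnegativity is immediate because $A_t\in\{0,1\}$ and $q_{N,\rho}(n_t)>0$. Also $Z_0=1$, and each $Z_t$ is $\calF_t$-measurable.

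The key step is the one-step conditional bound. Write
\[
\E[Z_t\mid\calF_{t-1}]=Z_{t-1}\,\E\bigl[\E[\eta_t\mid\calF^{-}_{t}]\bigm|\calF_{t-1}\bigr],
\]
using that $Z_{t-1}$ is $\calF_{t-1}$-measurable and the tower property. By Assumption~\ref{ass:post-commit}, conditional on $\calF^{-}_{t}$ the sample is uniform of size $n_t$ from the fixed population, and $f_t$ is fixed. Under $H_0$ we have $R(f_t)>\rho$ almost surely. Proposition~\ref{prop:audit-probability} then gives $\Pr(A_t=1\mid\calF^{-}_{t})\leq q_{N,\rho}(n_t)$, and since $n_t$ is $\calF^{-}_{t}$-measurable, $\E[\eta_t\mid\calF^{-}_{t}]\leq1$. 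Hence $\E[Z_t\mid\calF_{t-1}]\leq Z_{t-1}$.

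Two points need care. First, the adaptivity of $f_t$ and $n_t$ is handled by conditioning on $\calF^{-}_{t}$ rather than $\calF_{t-1}$, so the candidate may depend arbitrarily on earlier feedback. Second, $H_0$ must hold for the realized adaptive sequence, since the bound uses $R(f_t)>\rho$ only conditionally. If the process is stopped, I would freeze $Z$ after stopping, setting $\eta_t=1$, which preserves the supermartingale property.

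Finally, I would apply Ville's inequality for nonnegative supermartingales with $Z_0=1$ \citep{ville1939,ramdas2023}, which gives $\Pr_{H_0}(\sup_t Z_t\geq1/\alpha)\leq\alpha$ and hence equation~\eqref{eq:ville}. As a self-contained alternative, I would take the stopping time $\tau=\inf\{t:Z_t\geq1/\alpha\}$ and use optional stopping on $Z_{\tau\wedge t}$. Markov's inequality gives $\Pr(\tau\leq t)\leq\alpha\,\E[Z_{\tau\wedge t}]\leq\alpha$, and letting $t\to\infty$ yields the claim.

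I expect the main obstacle to be the filtration bookkeeping rather than any computation. The step to get right is checking that the pre-draw sigma-field makes $n_t$ and $f_t$ fixed while the fresh sample remains uniform, which is exactly what Assumption~\ref{ass:post-commit} provides.
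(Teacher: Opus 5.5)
Your proposal is correct and follows the paper's proof: you condition on the pre-audit information, use Proposition~\ref{prop:audit-probability} under $H_0$ to get $\E[\eta_t\mid\calF^{-}_t]\leq1$, multiply by the nonnegative preceding product to obtain the supermartingale property, and then apply Ville's inequality. Your optional-stopping derivation of Ville's bound is a valid self-contained replacement for the paper's citation, and your filtration bookkeeping makes explicit what the paper leaves implicit.
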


\begin{proof}
Under $H_0$, Proposition~\ref{prop:audit-probability} gives $\E[\eta_t\mid\text{pre-audit history}]\leq1$. Multiplication by the nonnegative preceding product establishes the supermartingale property, and Ville's inequality gives equation~\eqref{eq:ville} \citep{ville1939}.
\end{proof}

Rejecting $H_0$ supplies evidence against the assertion that every candidate is bad, although that evidence may come from a candidate earlier in the sequence. For example, consider the conservative factors $\eta_t=A_t(1-\rho)^{-n_t}$ with $\rho=0.05$ and threshold $800$. Earlier correct candidates can supply $130$ failure-free draws, giving $0.95^{-130}\approx786.92$. If the next candidate fails $26$ of $500$ cases and is audited once, a passing draw occurs with probability $474/500=0.948$ and raises the product to $0.95^{-131}\approx828.34$. The threshold is crossed while the current candidate is bad, because the earlier candidates already falsified the sequence hypothesis. Similarly, evidence that each family was satisfactory at some time need not identify one implementation satisfactory in every family. Theorem~\ref{thm:allocation} assigns evidence to the candidate being accepted and avoids this transfer across implementations.

\paragraph{Comparison of acceptance conditions.}
\label{sec:regimes}
Table~\ref{tab:regimes} lists sufficient conditions for an error target $\alpha$ over $T$ rounds and one family. Write $Q(n)=q_{N,\rho}(n)$ and $H_T=\sum_{t=1}^{T}\prod_{s=1}^{t-1}L_s$. The known-test condition is sharp for the unrestricted candidate class, and Theorem~\ref{thm:probe} supplies a constructive lower bound for repeated feedback from a hidden test set.

\begin{table}[t]
\centering
\small
\begin{tabular}{lll}
\toprule
Test selection & Sufficient size condition & New cases per round \\
\midrule
Known fixed tests & $n\geq N-\lfloor\rho N\rfloor$ & $0$ after setup \\
Hidden fixed tests & $H_T\,Q(n)\leq\alpha$ & $0$ after setup \\
Uniform generation & $T\,Q(m)\leq\alpha$ & $m$ \\
Fresh audit & $T\,Q(n)\leq\alpha$ & $n$ \\
\bottomrule
\end{tabular}
\caption{Sufficient acceptance conditions over a common horizon. The uniform-generation row assumes the sampling rule in Proposition~\ref{prop:dynamic}. Case counts refer to new samples and exclude repeated execution of regression tests.}
\label{tab:regimes}
\end{table}

Uniform generation and fresh auditing have the same sampling guarantee because both draw cases after the candidate is fixed. For targeted generation, the selection distribution determines the relation to population risk. For repeated testing on a hidden set, the feedback history determines the sufficient bound. These are properties of the sampling and information rules, which GTDD specifies separately from the model that generates development tests.

\section{Exact Numerical Analysis}
\label{sec:numerics}

We evaluate the finite-population formulas to compare the sample sizes needed for acceptance and the probabilities produced by repeated feedback. Exact ratios of binomial coefficients give both the false-acceptance bounds and the acceptance probabilities of candidates within tolerance. Values are calculated with integer arithmetic and rounded for display.

\subsection{Setup}

The calculations use populations of size $N=500$ with tolerance $\rho=0.05$ and error target $\alpha=0.05$. The smallest failure count above tolerance is $K_\rho=26$. For repeated decisions, we consider $M=8$ trajectories with $T=4$ rounds and $J=5$ families, giving $32$ opportunities to accept a repository or $160$ individual family conclusions.

The probing construction uses $r=3$ disjoint regions of size $m=15$ and at most $k=6$ reported cases per round. The sequence runs through all four prescribed rounds, including the terminal candidate, as assumed in Theorem~\ref{thm:probe}. When each region contains at most $6$ sampled cases, that candidate fails at least $3(15-6)=27$ population cases and passes the hidden test set. For acceptance within tolerance, we use the same family population and vary the candidate's total number of failures.

\subsection{Probability Calculation}

The audit sample size is the smallest integer $n$ for which $q_{500,0.05}(n)$ meets the applicable error allocation. The three allocations are $0.05$ for one candidate and family, $0.05/32$ for joint repository acceptance, and $0.05/160$ for simultaneous family conclusions. We compare each exact count with the sufficient count obtained from the exponential bound.

For the probing construction, let $c_h$ be the coefficient of $x^{h}$ in $\bigl(\sum_{i=0}^{k}\binom{m}{i}x^{i}\bigr)^{r}$. Counting how the sampled cases are distributed across the regions gives
\begin{equation}
p_{\mathrm{success}}=\frac{1}{\binom{N}{n}}\sum_{h=0}^{rk}c_h\binom{N-rm}{n-h}\,\1\bigl[rm-h\geq K_\rho\bigr],
\label{eq:probe-exact}
\end{equation}
where impossible binomial terms are zero. The indicator requires the terminal candidate's risk to exceed the tolerance. Appendix~\ref{app:proofs} gives the counting argument. For a candidate with a specified failure count, its acceptance probability follows directly from equation~\eqref{eq:hypergeometric}.

\subsection{Results}

The exact calculations quantify the effect of the acceptance target and the feedback available before the candidate is fixed. They also describe the probability that a zero-failure audit accepts a candidate whose population risk is already within tolerance.

\paragraph{Required sample size.}
Table~\ref{tab:sizes} gives the smallest exact audit size for each target. Simultaneous family conclusions require $131$ cases because $q_{500,0.05}(131)\approx0.000291773\leq0.05/160$, and the next smaller size gives approximately $0.000313825$. Joint repository acceptance reduces the multiplicity factor from $160$ to $32$ and requires $108$ cases per family. For one candidate and one family, $54$ cases suffice. The exact calculation improves on the exponential sufficient size in all three comparisons.

\begin{table}[t]
\centering
\small
\begin{tabular}{lrrr}
\toprule
Guarantee target & Exact $n$ & Exponential $n$ & $q_{500,0.05}(n)$ \\
\midrule
One candidate & $54$ & $59$ & $0.0472088$ \\
Repository over $32$ rounds & $108$ & $126$ & $0.00148353$ \\
$160$ family conclusions & $131$ & $158$ & $0.000291773$ \\
\bottomrule
\end{tabular}
\caption{Smallest exact audit sizes and exponential sufficient sizes. The repository row covers $8$ trajectories of $4$ rounds; the family row additionally covers $5$ separate family conclusions per round.}
\label{tab:sizes}
\end{table}

\paragraph{Adaptation through repeated feedback.}
Table~\ref{tab:probe} reports the lower bound and exact success probability of the three-region construction. The construction remains effective for sample sizes above the value $n^{*}$ obtained from the particular probe-size choice in equation~\eqref{eq:nstar}. At $n=131$, its success probability is approximately $0.8082$, while the same number of fresh audit cases gives an acceptance probability at most approximately $0.000292$ for a bad family. The first probability includes adaptation through three rounds of feedback, whereas the second concerns a candidate fixed before sampling. At $n=200$, the exact success probability remains $0.2195$ even though the union lower bound is zero.

\begin{table}[t]
\centering
\small
\begin{tabular}{rrr}
\toprule
Hidden test-set size $n$ & Union lower bound & Exact success \\
\midrule
$54$ & $0.9989$ & $0.9989$ \\
$100$ & $0.9504$ & $0.9511$ \\
$104$ & $0.9385$ & $0.9395$ \\
$108$ & $0.9245$ & $0.9260$ \\
$131$ & $0.7963$ & $0.8082$ \\
$158$ & $0.5199$ & $0.5876$ \\
$200$ & $0.0000$ & $0.2195$ \\
\bottomrule
\end{tabular}
\caption{Success of the three-region probe construction with $N=500$, $m=15$, and $k=6$. The probabilities describe the explicit casewise adversary after all four prescribed rounds.}
\label{tab:probe}
\end{table}

\paragraph{Acceptance within tolerance.}
Table~\ref{tab:operating} gives the probability that a candidate passes a zero-failure audit as its population failure count changes. A candidate with $5$ failures among $500$ cases has risk $0.01$ and passes a $131$-case audit with probability approximately $0.2174$, and at the tolerance boundary of $25$ failures that probability is approximately $0.000403$. The rule therefore combines strong false-acceptance control with a low acceptance probability for candidates near the tolerance boundary. An audit that permits failures moves this tradeoff at the cost of a larger sample: for joint repository acceptance at $0.05/32$, the zero-failure rule needs $108$ cases per family and accepts a candidate with $5$ failures with probability approximately $0.295$, whereas the rule that permits $4$ failures needs $221$ cases by Proposition~\ref{prop:audit-failures} and accepts the same candidate with probability approximately $0.984$.

\begin{table}[t]
\centering
\small
\begin{tabular}{rrrrr}
\toprule
Failures & Risk & Pass at $n=54$ & Pass at $n=108$ & Pass at $n=131$ \\
\midrule
$0$ & $0.000$ & $1.000000$ & $1.000000$ & $1.000000$ \\
$1$ & $0.002$ & $0.892000$ & $0.784000$ & $0.738000$ \\
$5$ & $0.010$ & $0.563334$ & $0.294558$ & $0.217358$ \\
$10$ & $0.020$ & $0.315393$ & $0.085553$ & $0.046396$ \\
$25$ & $0.050$ & $0.053264$ & $0.001920$ & $0.000403$ \\
$26$ & $0.052$ & $0.047209$ & $0.001484$ & $0.000292$ \\
\bottomrule
\end{tabular}
\caption{Exact probabilities that one family's zero-failure audit passes. The count $26$ is the first above tolerance.}
\label{tab:operating}
\end{table}

\section{Empirical Case Study}
\label{sec:case-study}

We evaluate how development tests change successive implementations of a stateful API. A paired design compares five policies from the same initial implementation, four added conditions refine that comparison, and every candidate is measured on a common private evaluation set. The experiment tests the development role of GTDD by comparing fixed tests with tests regenerated during repair, and it evaluates the acceptance rule afterwards on the recorded candidates. Four choices, each stated where it applies, distinguish this implementation from the protocol of Section~\ref{sec:gtdd}: the coding model answers single requests without tool access and so cannot edit protected files, candidates run inside the evaluator's process under a wall-clock limit, reduction preserves any disagreement with the reference rather than the designated one, and no acceptance rule acts during the four rounds.

\subsection{Experimental Setup}

The task is a single-client, in-memory key-value store modeled on a subset of Redis. Its stateful behavior allows counterexamples to expose interactions among commands and makes repeated implementation changes measurable on the same collection of traces.

\paragraph{Task and evaluation inputs.}
The interface has $61$ commands in eight families: keys, strings, lists, hashes, sets, sorted sets, expiry under a logical clock, and transactions with optimistic watches. A $2{,}072$-word specification fixes every reply and error code, including the order in which checks apply, and a $620$-line reference implementation computes expected replies. We selected the task because initial implementations fail on a substantial fraction of traces while counterexample-based repair remains possible. Each evaluation trace applies $10$ to $40$ operations to a fresh service, each generated trace contains at most $40$ operations, and a reduced trace is no longer than its original. The evaluation set consists of $2{,}000$ traces drawn once from a fixed random generator and used for every condition and round without being reported to either model. Each candidate is imported into the evaluator's process, which applies a limit of $0.5$ seconds of wall-clock time per trace. A reply is correct when it equals the reference reply as a JSON value, so a boolean never matches an integer. The harness that ran the trajectories, including the selection of counterexamples, compared replies with Python equality, under which a boolean matches an integer. Scoring every saved candidate again, each in a separate process under the same limit, reproduced every recorded count under that comparison exactly, and the JSON comparison changes the count of one candidate, the final module of one Dynamic blind trajectory, by $7$ of the $2{,}000$ traces. Every count reported below uses the JSON comparison. The changed module is the last of its trajectory, so every candidate that received feedback has the same count under both comparisons, and the counterexamples it received were selected under Python equality. The records of the five conditions consist of every candidate module, every generator the testing model wrote, the counterexamples reported in each round, and each candidate's failure count on the evaluation set after each round, together with the outcome of every evaluation trace from this second scoring.

\paragraph{Model and paired design.}
Both coding and testing use gpt-4.1-mini, snapshot 2025-04-14, through single requests without tool access. Temperature, nucleus sampling, and the seed are left at the API defaults, and a request that fails is retried up to three times. A coding request receives the specification, the current module, and the feedback, and returns a complete module that is committed before execution. Each of $30$ blocks begins with a module generated from the specification alone. All five conditions share that block's initial module and then perform three repair rounds, giving four coding rounds in total. Every condition receives the public example's outcome, and all trajectories run for the prescribed four rounds without an acceptance gate.

\paragraph{Statistical analysis.}
The outcome is the evaluation failure rate after each round. The primary comparison is the within-block difference after round four between Dynamic informed and Static LLM, and the remaining contrasts are secondary and exploratory, without adjustment for multiplicity. Percentile bootstrap intervals resample blocks $10{,}000$ times. The $p$-values come from a seeded Monte Carlo sign-flip test over blocks with $100{,}000$ sign assignments, under the null that within-block differences are symmetric about zero. Appendix~\ref{app:sensitivity} adds the median, the $10\%$ trimmed mean, and the exact sign test of every contrast, whose null concerns only the direction of the differences.

\subsection{Test Generation}

Each policy specifies when tests are generated, what information the testing model receives, and how many traces are sampled. The coding schedule and maximum number of counterexamples returned in a round are the same across conditions.

\paragraph{Baselines.}
Public only supplies the outcome of the public example and no additional feedback. Static property uses $400$ traces drawn before development from the evaluation generator with a different seed, so its tests share the evaluation distribution. These inputs remain fixed throughout the four coding rounds.

\paragraph{LLM-generated inputs.}
The testing model writes a trace generator so that sampled inputs can combine interacting rules. Every LLM tester is instructed to write a generator likely to expose violations, so every LLM condition, including the added ones below, is adversarial in the sense of Section~\ref{sec:gtdd}, and Dynamic informed is the only one that can aim at the committed candidate itself; Static property is not adversarial. Static LLM requests one generator from the specification before round one, samples $400$ traces, and fixes the valid inputs for subsequent rounds. Their mean number is $312.9$, and in $4$ of the $30$ blocks the generator produced no valid trace, so that condition received no development tests beyond the public example. Dynamic blind requests a new generator in every repair round from the specification and all previously reported counterexamples, then samples $200$ fresh traces, while Dynamic informed also supplies the current module to the testing model. Each dynamic policy therefore requests $600$ traces over the three repair rounds, compared with $400$ for Static LLM. The valid traces actually supplied are fewer, $13{,}816$ of $18{,}000$ for Dynamic blind and $14{,}760$ of $18{,}000$ for Dynamic informed, against $9{,}386$ of $12{,}000$ for Static LLM, because a generator can fail to run or produce invalid traces. A generator request is limited to $8{,}000$ output tokens, and a response cut off at that limit yields no runnable generator; a generator that runs can also raise an error or return only invalid traces. A round whose generator supplies no valid trace adds no new tests, which happened in $17$ and $13$ of the $90$ generator requests of the two dynamic policies. The comparison evaluates these policies as specified, combining per-round regeneration and access to earlier feedback with their respective sampling budgets.

\paragraph{Counterexample feedback.}
The four conditions with development tests return at most $6$ reduced counterexamples per round. They select by family coverage and then by length from a pool of $24$ reductions, and each coding request includes every counterexample reported so far. In this experiment, reduction preserves any disagreement with the reference, and the family label comes from the command at which the original trace first diverged. This criterion permits a reduction to expose a different disagreement from the original one. The resulting feedback is evaluated through the next candidate's failure rate on the common evaluation set.

\paragraph{Added conditions.}
After the five conditions had been analyzed, we ran four further conditions on the same blocks, first-round modules, and evaluation set, all on one day and in an order randomized within each block, to separate parts of the difference between Static LLM and the dynamic policies. Static LLM 600 writes one generator from the specification before round one, samples $600$ traces from it, and uses the valid ones from the start. Staged 600 uses the same generator and the same sampled traces, releasing $200$ of them in each repair round and keeping the valid ones. Regenerated without history writes a new generator from the specification alone in every repair round and samples $200$ traces, and Blind rerun repeats Dynamic blind on that day. Adjacent conditions are designed to differ mainly in one respect: the number of requested traces, whether the traces arrive at once or across rounds, whether one generator or three produce them, and whether the tester sees the reported counterexamples. These comparisons were chosen after the first results and are exploratory. For these conditions the records also hold every request and response, the original trace of every reported counterexample, the outcome of the public example, and the outcome of every evaluation trace.

\subsection{Results}

Both dynamic policies produce lower final failure rates than Static LLM. Table~\ref{tab:case-rates} shows the changes across coding rounds, and Table~\ref{tab:case-contrasts} reports the paired differences after round four.

\begin{table}[t]
\centering
\small
\setlength{\tabcolsep}{4pt}
\begin{tabular}{lrrrrr}
\toprule
Condition & Round 1 & Round 2 & Round 3 & Round 4 & New counterexamples in round 3 \\
\midrule
Public only & $0.548$ & $0.346$ & $0.342$ & $0.342$ & --- \\
Static property & $0.548$ & $0.396$ & $0.309$ & $0.298$ & $4.93$ \\
Static LLM & $0.548$ & $0.382$ & $0.363$ & $0.405$ & $3.40$ \\
Dynamic blind & $0.548$ & $0.360$ & $0.357$ & $0.279$ & $5.77$ \\
Dynamic informed & $0.548$ & $0.441$ & $0.316$ & $0.274$ & $5.37$ \\
\bottomrule
\end{tabular}
\caption{Mean evaluation failure rate after each coding round over $30$ blocks, and the mean number of counterexamples newly disclosed after round three, which the last repair round received. Round one is the shared first-round module.}
\label{tab:case-rates}
\end{table}

\begin{table}[t]
\centering
\small
\setlength{\tabcolsep}{4pt}
\begin{tabular}{lrrrr}
\toprule
Contrast (first minus second) & Difference & 95\% interval & $p$ & Lower and higher \\
\midrule
Dynamic informed $-$ static LLM & $-0.131$ & $[-0.228,-0.050]$ & $0.0015$ & $19,7$ \\
Dynamic informed $-$ dynamic blind & $-0.005$ & $[-0.044,+0.034]$ & $0.810$ & $12,15$ \\
Dynamic blind $-$ static LLM & $-0.126$ & $[-0.223,-0.044]$ & $0.0012$ & $20,8$ \\
Dynamic informed $-$ static property & $-0.024$ & $[-0.097,+0.043]$ & $0.534$ & $13,13$ \\
Static property $-$ static LLM & $-0.107$ & $[-0.205,-0.018]$ & $0.029$ & $20,7$ \\
Dynamic informed $-$ public only & $-0.067$ & $[-0.103,-0.032]$ & $0.0007$ & $20,8$ \\
Static LLM $-$ public only & $+0.064$ & $[-0.017,+0.159]$ & $0.195$ & $15,12$ \\
Static property $-$ public only & $-0.044$ & $[-0.102,+0.019]$ & $0.175$ & $21,5$ \\
Dynamic blind $-$ static property & $-0.019$ & $[-0.090,+0.046]$ & $0.611$ & $14,14$ \\
\bottomrule
\end{tabular}
\caption{Within-block differences of the failure rate after round four. The last column counts the blocks in which the first condition was lower and higher; the remaining blocks are ties. The first row is the primary comparison.}
\label{tab:case-contrasts}
\end{table}

\paragraph{Primary comparison.}
All conditions start from a mean failure rate of $0.548$. After round four, Dynamic informed has a mean rate of $0.274$, compared with $0.405$ for Static LLM. The within-block difference is $-0.131$, with a $95\%$ interval of $[-0.228,-0.050]$ and $p=0.0015$. Dynamic informed is lower in $19$ of the $30$ blocks. Dynamic blind also finishes below Static LLM, with a mean rate of $0.279$ and a paired difference of $-0.126$. The size of the mean difference depends heavily on three blocks in which Static LLM, but not the dynamic policies, ended with a module that failed almost every evaluation trace. In two of them the final repair broke a module that had passed most traces, once so that it could no longer be imported, and in the third the shared first-round module failed every trace and the Static LLM repairs never recovered it. The median within-block difference is $-0.044$ and the $10\%$ trimmed mean is $-0.076$, and the exact sign test over the $26$ blocks without a tie gives $p=0.029$ (Appendix~\ref{app:sensitivity}).

\paragraph{Source access and the property-based baseline.}
The estimated difference between Dynamic informed and Dynamic blind is $-0.005$, with interval $[-0.044,+0.034]$ and $p=0.810$. Source access produces no detectable additional improvement under this comparison. Neither dynamic condition has a statistically detectable advantage over Static property, which draws its inputs from the generator that produced the evaluation set and is therefore a strong baseline for this task.

\paragraph{Failures and successive repairs.}
Among the valid traces generated for the current module, $37.2\%$ fail under Dynamic informed and $35.7\%$ under Dynamic blind. In round three, these policies return means of $5.37$ and $5.77$ new counterexamples, respectively, compared with $3.40$ for Static LLM. The fixed LLM inputs continue to expose failures, but they supply fewer previously unreported counterexamples as development proceeds. Static LLM's mean failure rate rises between rounds three and four because of the two final repairs that broke their modules, and its final rate is not statistically distinguishable from Public only. Across all $150$ trajectories, one reaches a failure rate of at most $0.01$, so substantial errors remain after four rounds even in the conditions with the strongest average improvement.

\paragraph{Added conditions.}
After round four, the mean failure rates are $0.355$ for Static LLM 600, $0.311$ for Staged 600, $0.297$ for Regenerated without history, and $0.308$ for Blind rerun, against $0.405$ for Static LLM with $400$ requested traces (Appendix~\ref{app:sensitivity}). No adjacent difference is detectable, and neither is the difference between the two runs of Dynamic blind, $+0.029$. With equal numbers of requested traces on the same day, Blind rerun ends $0.047$ below Static LLM 600, with interval $[-0.121,+0.037]$. The rates fall over the first three steps and rise slightly at the last, without any single step being detectable with $30$ blocks, so this study does not attribute the advantage of the dynamic policies to regeneration, to the number of traces, or to the reported history alone.

\paragraph{Audits of the recorded candidates.}
This analysis uses the $600$ candidates of the five original conditions. The trajectories ran without an acceptance gate, but each recorded candidate's failure count on the evaluation set determines exactly the probability that a fresh audit drawn from that set would accept it (Proposition~\ref{prop:audit-failures}), provided that each round draws an independent sample and the audit affects only whether the trajectory stops, not the candidates that development would otherwise produce. We treat the $2{,}000$ evaluation traces as one population with tolerance $\rho=0.05$. Of the $600$ recorded candidates, $13$ are within this tolerance, including the final candidates of $6$ trajectories. Each block's first-round module is counted once per condition, since each condition audits it independently. A zero-failure audit that bounds false acceptance by $\alpha=0.05$ over the four rounds of one trajectory needs $83$ cases. Applied to all $150$ trajectories with stopping at the first acceptance, it yields $0.074$ expected false acceptances in total and $1.01$ expected acceptances of candidates within tolerance, and a trajectory ends without acceptance with mean probability $0.993$. A bound of $0.05$ over all $600$ audit opportunities requires $174$ cases, which lowers the expected false acceptances to $0.00027$ and the expected acceptances within tolerance to $0.251$. These values describe the audit rule applied to candidates that development actually produced; they concern the aggregate failure rate rather than individual families, and the public example is not part of the rule because the records do not include its outcome.

\section{Discussion}

The results assign complementary roles to generated tests, existing regression tests, and independent audits. Development uses concrete failures to improve the implementation, while acceptance requires a probability statement about the candidate currently under consideration.

\paragraph{Development feedback.}
The paired experiment shows that Dynamic blind and Dynamic informed end with lower mean failure rates than Static LLM, whose tests were generated once by the same model. Their generators use previous counterexamples and request more traces over the repair rounds, and the added conditions, which change mainly one of these features at each step, produce intermediate rates with no detectable step, so the result compares complete development policies. Static property draws its inputs from the generator that produced the evaluation set; its strength shows the value of inputs whose distribution matches the target, and the dynamic policies, which never see that generator, end without a detectable difference from it. GTDD accommodates both sources and saves their verified cases for subsequent regression checks.

\paragraph{Evidence for acceptance.}
Test visibility and feedback determine what an adaptive candidate can learn before acceptance. The exact known-test result describes unrestricted casewise adaptation, and the history bound accounts for the information returned by repeated testing on hidden inputs. Fresh conditional sampling after commitment gives a direct bound for the current candidate, with separate allocations for repository acceptance and simultaneous family conclusions. Over an unbounded sequence, summable allocations preserve that interpretation as implementations change. Counterexample reduction makes failures easier to repair, while its mapping to original cases determines how much information the feedback reveals.

\paragraph{The evaluation target.}
A fixed population gives an explicit target on which complete evaluation has no sampling error. The audit uses fewer cases to decide whether that target is met, and its sample size controls both false acceptance and the chance of accepting a candidate within tolerance. Table~\ref{tab:operating} makes this tradeoff visible for the zero-failure rule. For structured programs, the reference implementation defines the expected responses. For language and speech applications, the declared evaluation procedure defines the loss, so the choice of scenarios and judgments is part of the target itself. This formulation connects development tests to a precise acceptance claim.

\section{Conclusion}
We formulated GTDD as test-driven development in which a separate testing agent generates new inputs after each candidate is fixed and verified failures become feedback and regression tests. The finite-population analysis characterizes how known tests and repeated feedback affect false acceptance, and establishes fresh-audit guarantees for adaptive development. In the paired experiment, Dynamic blind and Dynamic informed ended with lower mean failure rates than Static LLM, further conditions did not attribute this difference to a single feature, source access produced no detectable additional gain, and property-based tests drawn from the evaluation distribution provided a strong baseline.

\bibliography{arXiv2.bbl}

\bibliographystyle{tmlr}

\clearpage
\appendix

\section{Event Order and Information Access}
\label{app:event-order}

The event order ensures that testing evaluates the committed implementation and that audit information cannot influence its selection. A source-informed round satisfies
\begin{equation}
\begin{split}
&\text{coding end}<\text{candidate commitment}<\text{testing start}<\text{testing end}\\
&\quad<\text{audit sampling}<\text{audit end}<\text{acceptance record}<\text{population evaluation end}.
\end{split}
\label{eq:event-order}
\end{equation}
Conditions without generated testing omit its events, and fixed tests are re-executed against the same committed snapshot. Audit randomness is independent of everything available before the draw, including the candidate and development-test outputs. The private audit record stores the allocation and its randomness for verification. When a cryptographic pseudorandom generator supplies the randomness, its security is an implementation assumption distinct from the exact uniform-sampling model.

Every candidate-readable file and dependency belongs to the committed manifest, while only the evaluator can access the reference implementation and audit records. Each case runs with a fresh writable layer that is discarded before the next case, and candidate output returns only through the permitted evaluation channel. Population evaluation leaves no persistent effect visible to a later candidate. A failure of these access restrictions or of the evaluator's own execution environment stops the affected run.

The aggregate acceptance bit is the only audit feedback to the coding session. An information bound for repeated testing on a hidden set counts this bit unless the other feedback determines it. Population statistics remain unavailable to the agents, including through scheduling, stopping decisions, or diagnostic messages.

\section{Counterexample Selection and Reduction}
\label{app:disclosure}

Counterexample selection determines which observed failures guide the next coding round. Both rules choose at most $k$ previously unreported development failures, using only the permitted information about those failures.

\paragraph{Selection rules.}
The standard rule first favors a semantic family not yet selected in the round, then a failure mechanism not yet selected, and finally shorter traces and canonical identifiers. The compact-failure rule first favors shorter traces, then fewer distinct operations, and then families already reported in an earlier round, with canonical identifiers breaking ties. Both rules choose deterministically among the same observed failures.

\paragraph{Reduction budget.}
A preliminary selection limits the reduction pool to $4k$ original traces, each containing at most $L$ operations, where $L$ is the schema's trace-length limit. Prefix search evaluates at most $L$ prefixes. Each pass of single-operation deletion evaluates at most $L$ variants and shortens the trace by one operation when successful, so at most $L$ passes are possible. The reduction cost is therefore at most $L+L^{2}$ candidate executions per original trace and $4k(L+L^{2})$ per round.

Every shorter trace is validated before execution and replaces the current trace only after it has been verified to exhibit the same designated failure. The selection rule is applied again to the reduced pool, and the chosen reductions are included in both the feedback and the regression tests. Private links to the original traces support analysis of which case identities the reductions reveal.

\section{Elementary Properties}
\label{app:elementary}

The following properties describe the information available to testing policies and the behavior measured by counterexample repair and regression testing. They separate the capabilities of a policy class from the performance of a particular testing model, and relate success on reported cases to the remaining population.

\paragraph{Information available to the tester.}
Let $\Pi^{B}_t$ contain testing policies that receive the contract and reported history without the current source, and let $\Pi^{I}_t$ contain policies that may also read the source. For a fixed candidate $f_t$, let $\Delta_t(\pi;f_t)$ be the expected number of valid nonduplicate failures that policy $\pi$ proposes under a common budget. If every source-blind policy can be implemented by ignoring the source, then $\Pi^{B}_t\subseteq\Pi^{I}_t$ holds, and we have $\sup_{\pi\in\Pi^{I}_t}\Delta_t(\pi;f_t)\geq\sup_{\pi\in\Pi^{B}_t}\Delta_t(\pi;f_t)$. This inclusion compares the best attainable performance within each policy class, and a fixed model and prompt need not attain either supremum.

\paragraph{Repair and regression.}
For nonempty $E_t$, the next-round repair rate is the fraction of reported cases that the next candidate passes, $1-|E_t|^{-1}\sum_{z\in E_t}\ell(f_{t+1};z)$, and it equals one when every reported case passes. If the candidate class can realize every binary loss pattern on $\calU\cup E_t$, passing all cases in $E_t$ imposes no restriction on the losses on $\calU\setminus E_t$, because the class contains a candidate with zero loss on $E_t$ and any prescribed losses on the complement. If an accepted candidate must pass every case in $G_{t-1}$, it has no regression on those cases. A rejected candidate may still regress, and the requirement imposes no constraint on cases outside $G_{t-1}$.

\paragraph{Agreement on reported cases.}
In the unrestricted casewise class, for any strict subset $D\subsetneq\calU$ and any prescribed binary losses on $D$, two candidates can incur those losses on $D$ and have different population risks: on the nonempty complement, one has zero loss on every case and the other has unit loss on every case. This is the construction in the proof of Proposition~\ref{prop:selection}, applied to the reported cases.

\section{Proofs for Repeated Testing}
\label{app:proofs}

The history bound and probing construction account for two different effects of feedback from a fixed hidden test set. The first counts possible candidate choices, while the second constructs choices that exploit the case identities revealed by feedback.

\begin{proof}[Proof of Theorem~\ref{thm:feedback-history}]
Condition on $\xi$, which leaves $S$ uniform over the size-$n$ subsets. For each history $\tau$ possible before round $t$, the candidate $f_\tau=\phi_t(\tau,\xi)$ is fixed independently of $S$. The round-$t$ false-acceptance event is contained in the union, over these histories, of the events that $f_\tau$ is bad and passes $S$. Proposition~\ref{prop:audit-probability} bounds each probability by $q_{N,\rho}(n)$. A union bound over histories and rounds, followed by averaging over $\xi$ and imposing the upper bound of one, gives the first inequality.

Each history has at most $L_s$ continuations at round $s$, giving $|\calT_{t-1}(\xi)|\leq\prod_{s=1}^{t-1}L_s$. The number of raw subsets of size at most $k$ is $\sum_{i=0}^{\min\{k,N\}}\binom{N}{i}\leq(N+1)^{k}$, and the acceptance bit contributes at most a factor of two. Bounding each history count by the largest and using $q_{N,\rho}(n)\leq(1-\rho)^{n}$ gives equation~\eqref{eq:history-count}. The sampling bound applies to each fixed candidate before conditioning on the realized feedback.
\end{proof}

\begin{proof}[Proof of Theorem~\ref{thm:probe}]
Fix disjoint regions $F_1,\ldots,F_r$ of size $m$, which exist because $rm\leq N$ holds. In round $s\leq r$, submit a candidate that fails exactly on $F_s$. All previously reported cases lie in earlier regions, so every failing sampled case in round $s$ is new to the feedback. On the event $G_s=\{|S\cap F_s|\leq k\}$, the rule returns every case in $S\cap F_s$.

In round $T$, fail on the union of the probed regions after removing the reported cases. On $G=\bigcap_{s=1}^{r}G_s$, this set is $\bigcup_{s=1}^{r}(F_s\setminus S)$ and is disjoint from the test set. The candidate passes with risk at least $r(m-k)/N>\rho$. Each region's occupancy has a hypergeometric distribution with parameters $(N,m,n)$, so a union bound gives $\Pr(G)\geq1-r\,p(m)$. Taking the maximum with zero gives the stated bound. The candidate choices depend only on the reported cases and the known population.
\end{proof}

\begin{proof}[Proof of Corollary~\ref{cor:nstar}]
The chosen $m$ gives mean $nm/N\leq k/2$, and the two integer conditions permit direct application of Theorem~\ref{thm:probe}. Without rounding, $r(kN/(2n)-k)>\rho N$ is equivalent to $1/n>2\rho/(rk)+2/N$, which gives equation~\eqref{eq:nstar}. If $n\geq rk/2$ holds, then we have $rm\leq rkN/(2n)\leq N$. The hypergeometric tail $p(m)$ depends on the specified values of $N$, $n$, $m$, and $k$.
\end{proof}

\begin{proof}[Derivation of equation~\eqref{eq:probe-exact}]
For region occupancies $h_1,\ldots,h_r$ satisfying $0\leq h_s\leq k$ for every $s$, the number of test sets realizing them is $\prod_{s=1}^{r}\binom{m}{h_s}\cdot\binom{N-rm}{n-\sum_{s=1}^{r}h_s}$. Summing the first product over occupancies with total $h$ gives $c_h$. The terminal candidate fails on $rm-h$ cases, so its risk exceeds tolerance exactly when $rm-h\geq K_\rho$ holds. Summing these counts and dividing by $\binom{N}{n}$ gives the formula.
\end{proof}

\section{Additional Case-Study Results}
\label{app:sensitivity}

This appendix reports the four added conditions of Section~\ref{sec:case-study} beside the two original conditions they refine, their within-block contrasts, and, for every contrast, the mean, median, trimmed mean, and exact sign test of the $30$ within-block differences.

\begin{table}[h]
\centering
\small
\begin{tabular}{lrrrrr}
\toprule
Condition & Round 1 & Round 2 & Round 3 & Round 4 & Valid tests \\
\midrule
Static LLM & $0.548$ & $0.382$ & $0.363$ & $0.405$ & $9{,}386$ \\
Static LLM 600 & $0.548$ & $0.377$ & $0.358$ & $0.355$ & $14{,}291$ \\
Staged 600 & $0.548$ & $0.352$ & $0.346$ & $0.311$ & $14{,}291$ \\
Regenerated without history & $0.548$ & $0.347$ & $0.304$ & $0.297$ & $15{,}381$ \\
Dynamic blind & $0.548$ & $0.360$ & $0.357$ & $0.279$ & $13{,}816$ \\
Blind rerun & $0.548$ & $0.396$ & $0.346$ & $0.308$ & $12{,}394$ \\
\bottomrule
\end{tabular}
\caption{Mean evaluation failure rate after each coding round over $30$ blocks for Static LLM, the four added conditions, and Dynamic blind, with the number of valid generated traces each condition received in total.}
\label{tab:extra-rates}
\end{table}

\begin{table}[h]
\centering
\small
\setlength{\tabcolsep}{3pt}
\begin{tabular}{lrrrr}
\toprule
Contrast (first minus second) & Difference & 95\% interval & $p$ & Lower and higher \\
\midrule
Staged 600 $-$ static LLM 600 & $-0.044$ & $[-0.132,+0.033]$ & $0.326$ & $15,9$ \\
Regenerated without history $-$ staged 600 & $-0.014$ & $[-0.083,+0.054]$ & $0.694$ & $17,12$ \\
Blind rerun $-$ regenerated without history & $+0.012$ & $[-0.040,+0.078]$ & $0.794$ & $15,13$ \\
Static LLM 600 $-$ static LLM & $-0.050$ & $[-0.163,+0.052]$ & $0.378$ & $16,11$ \\
Blind rerun $-$ dynamic blind & $+0.029$ & $[-0.027,+0.097]$ & $0.446$ & $11,17$ \\
Blind rerun $-$ static LLM 600 & $-0.047$ & $[-0.121,+0.037]$ & $0.282$ & $17,12$ \\
\bottomrule
\end{tabular}
\caption{Within-block differences of the failure rate after round four involving the added conditions. The last column counts the blocks in which the first condition was lower and higher.}
\label{tab:extra-contrasts}
\end{table}

\begin{table}[h]
\centering
\small
\setlength{\tabcolsep}{4pt}
\begin{tabular}{lrrrr}
\toprule
Contrast (first minus second) & Mean & Median & Trimmed mean & Sign test $p$ \\
\midrule
Dynamic informed $-$ static LLM & $-0.131$ & $-0.044$ & $-0.076$ & $0.029$ \\
Dynamic informed $-$ dynamic blind & $-0.005$ & $+0.001$ & $-0.006$ & $0.701$ \\
Dynamic blind $-$ static LLM & $-0.126$ & $-0.023$ & $-0.061$ & $0.036$ \\
Dynamic informed $-$ static property & $-0.024$ & $+0.000$ & $-0.009$ & $1.000$ \\
Static property $-$ static LLM & $-0.107$ & $-0.065$ & $-0.086$ & $0.019$ \\
Dynamic informed $-$ public only & $-0.067$ & $-0.029$ & $-0.061$ & $0.036$ \\
Static LLM $-$ public only & $+0.064$ & $-0.001$ & $+0.019$ & $0.701$ \\
Static property $-$ public only & $-0.044$ & $-0.059$ & $-0.060$ & $0.0025$ \\
Dynamic blind $-$ static property & $-0.019$ & $+0.000$ & $-0.006$ & $1.000$ \\
Staged 600 $-$ static LLM 600 & $-0.044$ & $-0.000$ & $-0.021$ & $0.307$ \\
Regenerated without history $-$ staged 600 & $-0.014$ & $-0.004$ & $-0.014$ & $0.458$ \\
Blind rerun $-$ regenerated without history & $+0.012$ & $-0.001$ & $-0.010$ & $0.851$ \\
Static LLM 600 $-$ static LLM & $-0.050$ & $-0.014$ & $-0.018$ & $0.442$ \\
Blind rerun $-$ dynamic blind & $+0.029$ & $+0.004$ & $+0.008$ & $0.345$ \\
Blind rerun $-$ static LLM 600 & $-0.047$ & $-0.011$ & $-0.047$ & $0.458$ \\
\bottomrule
\end{tabular}
\caption{Summaries of the $30$ within-block differences after round four. The trimmed mean removes the $3$ largest and $3$ smallest differences, and the exact sign test drops ties.}
\label{tab:sensitivity}
\end{table}